\documentclass[12pt]{iopart}

% +++++++++++++++++++++++

%\usepackage{showkeys}
\usepackage{subfigure}

% +++++++++++++++++++++++

\usepackage{graphicx}

\usepackage{comment}
\usepackage{iopams}

\newcommand{\bea}{\begin{eqnarray}}
\newcommand{\eea}{\end{eqnarray}}
\newcommand{\beq}{\begin{equation}}
\newcommand{\eeq}{\end{equation}}
\newcommand{\nn}{\nonumber}
\newcommand{\eqref}[1]{(\ref{#1})}

\newcommand{\abs}[1]{\vert #1\vert}
\newcommand{\avg}[1]{\left\langle #1 \right\rangle}
\newcommand{\expect}[2]{\left\langle #1 \right\rangle_{#2}}

\newcommand{\half}{{\frac{1}{2}}}

\newcommand{\quarter}{{\frac{1}{4}}}

\newcommand{\dH}{{d_H}}
\newcommand{\dS}{{d_s}}
\newcommand{\cupper}{{\overline c}}

\newcommand{\clower}{{\underline c}}

\newcommand{\C}[1]{{\mathcal{#1}}}

\newcommand{\R}[1]{{\mathrm{#1}}}

\newcommand{\CGW}{{\mathrm{GW}}}
\newcommand{\Prob}{{\mathrm{Prob}}}

\newtheorem{theorem}{Theorem}
%[section]
\newtheorem{lemma}[theorem]{Lemma}

\newtheorem{assume}[theorem]{Assumption}

\newenvironment{proof}[1][Proof]{\begin{trivlist}
\item[\hskip \labelsep {\bfseries #1}]}{\end{trivlist}}

\newcommand{\qed}{\nobreak \ifvmode \relax \else
      \ifdim\lastskip<1.5em \hskip-\lastskip
      \hskip1.5em plus0em minus0.5em \fi \nobreak
      \vrule height0.75em width0.5em depth0.25em\fi}
\providecommand{\href}[2]{#2}

\begin{document}

\title[Multigraph models for causal quantum gravity]{Multigraph models for causal quantum gravity and scale dependent  spectral dimension}
%\maketitle
\author{Georgios Giasemidis, John F Wheater and Stefan Zohren}
\address{Rudolf Peierls Centre for Theoretical Physics, 1 Keble Road, Oxford OX1 3NP, UK}
\ead{ \mailto{g.giasemidis1@physics.ox.ac.uk}, \mailto{j.wheater1@physics.ox.ac.uk}, \mailto{zohren@physics.ox.ac.uk}}

\begin{abstract} 
We study random walks on ensembles of a specific class of random multigraphs which provide an ``effective graph ensemble'' for the causal dynamical triangulation (CDT) model of quantum gravity. In particular, we investigate the spectral dimension of the multigraph ensemble for recurrent as well as transient walks. We investigate the circumstances in which the spectral dimension and Hausdorff dimension are equal and show that this occurs when $\rho$, the exponent for anomalous behaviour of the resistance to infinity, is zero. The concept of scale dependent spectral dimension in these models is introduced. We apply this notion to a multigraph ensemble with a measure induced by a size biased critical Galton-Watson process which has a scale dependent spectral dimension of two at large scales and one at small scales. We conclude by discussing a specific model related to four dimensional CDT which has a spectral dimension of four at large scales and two at small scales.  \end{abstract}

\pacs{04.60.Nc,04.60.Kz,04.60.Gw}
\submitto{\JPA}

%\vspace{4cm}
%

\maketitle
\makeatletter
\renewcommand\tableofcontents{\section*{\contentsname}\@starttoc{toc}}
\makeatother
\tableofcontents
\newpage
\section{Introduction}

Models of (discrete) random geometry have recently received considerable attention both in theoretical physics and in probability theory, often motivated by models of quantum gravity (see \cite{Ambjorn:1997} for an overview).  As we will discuss in more detail later the models of interest are built on ensembles of random graphs which, while grown according to local rules, have non-trivial long distance geometrical properties. A large part of the study of these systems aims to understand as comprehensively as possible the long distance fractal structure of the graphs in these ensembles. 

A relatively simple characterisation of the fractal structure of the geometries is their dimensionality. One possible measure of the dimensionality is the  \emph{Hausdorff dimension} $\dH$ which is defined  in terms of $B_N$, the volume of a ball of radius $N$ centred on a fixed point, by\footnote{We define the relationship
$ X_N\sim N^\rho$
to mean that  there exists a finite constant $N_0$ such that for $N>N_0$
\bea \clower N^{\rho} \left(\log N\right)^{\clower'}<X_N< \cupper N^{\rho} \left(\log N\right)^{\cupper'}\eea
where $0<\clower< \cupper$, $ \clower' < \cupper'$ are  constants.  The relationship $f(x)\sim x^\rho$ for small enough $x$ is defined similarly. Throughout this paper we will use the letter c and its decorations to denote constants whose values may change from line to line.}
\beq \label{Hausdorff}
B_N\sim N^{\dH} 
\eeq
for $N\to\infty$ assuming the limit exists. The ensemble average value of  $\dH$ can be determined from  $\avg{B_N}$ while in some ensembles the stronger statement that $\dH$ takes the same value for almost all graphs can be made.  An important example is the generic random tree (GRT) \cite{Aldous1998,Durhuus:2006vk} which has Hausdorff dimension  $\dH=2$ both on the average and almost surely, and characterises the  branched polymer phase of Euclidean quantum gravity \cite{AW,tjjw}.

Another  notion of dimensionality is the  \emph{spectral dimension} $\dS$. Given the probability $p(t)$ that a simple random walk returns to its origin after $t$ steps then $\dS$ is defined by
\beq
p(t)\sim t^{-\dS/2}, \quad t\to\infty
\eeq
assuming the limit exists.
An important example in the context of quantum gravity is again the GRT which has  spectral dimension  $\dS=4/3$ \cite{Durhuus:2006vk}. We see that $\dS<\dH$ for the GRT which in fact saturates the right hand side of the relation 
\beq \label{relationdsdh}
\dH\geq \dS \geq \frac{2\dH}{1+\dH}
\eeq
 derived under certain assumptions for fixed graphs \cite{coulhon}.

Models for which the spectral and Hausdorff dimensions are well understood analytically include the GRT,  random combs \cite{Durhuus:2005fq}, random brushes \cite{brush} and non-generic trees \cite{non-generic,Stefansson:2012aa}.
%, 
In the  mathematical literature  percolation clusters have been studied intensively, see for instance \cite{Barlow:2005aa}. In all these examples the random walk is recurrent. An example of a non-recurrent situation is encountered in the study of biased random walks on combs \cite{biasedcombs}. The situation is somewhat less satisfactory for models more directly related to quantum gravity. Two dimensional Euclidean quantum gravity (equivalently the planar random graph ensemble) has $\dH=4$ but $\dS$ is not known although numerical simulations indicate that it is close to 2 \cite{earlynumsim,numerics,scaling}  and it is known that the subset of graphs with vertex order strictly less than some finite constant is recurrent \cite{benjamini}.  The causal dynamical triangulation (CDT)  approach to quantum gravity  (see \cite{Ambjorn:1998} and \cite{Ambjorn:2006} for a review) is amenable to analytic investigation in two dimensions; the uniform infinite causal triangulation (UICT)  \cite{Durhuus:2009sm,Krikun,Sisko} -- essentially a CDT constrained never to die out -- has $\dH=2$ and it has been proved that $\dS\le 2$  \cite{Durhuus:2009sm}. Three and four dimensional CDT appear out of reach analytically but numerical simulations have produced some intriguing results. 
 In particular,
 %, 
 the phenomenon of a scale dependent spectral dimension 
 has been observed in both three  \cite{benedetti} and four \cite{Ambjorn:2005db} dimensions. 
 These numerical results indicate that in the continuum limit there is a typical length scale of the model (in this case the Planck length) and that for lengths of the random walk much lower than this scale the spectral dimension is $d_s^0=2$ while for walk lengths much greater than this scale the spectral dimension is $d_s^\infty=4$. Such behaviour has also been observed in other approaches of quantum gravity, most importantly the exact renormalization group approach \cite{Litim,Lauscher,Reuter:2011ah} and Ho\v rava Lifshitz gravity \cite{Horava:2009aa,Horava:2009ab}, as well as in models of spin foams \cite{Modesto:2008jz,Caravelli:2009gk,Magliaro:2009if}; very recently there has been considerable interest in the proposal that in three and four dimensions CDT and Ho\v rava Lifshitz gravity are equivalent \cite{HLrel,Anderson:2011bj}. The circumstances under which the lhs of \eqref{relationdsdh} is saturated has been investigated very recently in a continuum formulation \cite{Calcagni:2011kn,Calcagni:2012aa}.

The numerical results for four dimensional CDT raise two immediate questions. Firstly whether there is a rigorous definition of scale dependent spectral dimension in the context of graph ensembles and secondly whether there might be a reduced version of the full CDT model which is analytically tractable and yet displays behaviour similar to the full model, at least insofar as the spectral dimension is concerned.

The first question was answered in the affirmative by \cite{Atkin:2011ak} where it was shown that for a graph ensemble containing a length scale in the measure there is a consistent definition of separate spectral dimensions $d_s^0$ and $d_s^\infty$  for walks respectively  shorter  and longer than the scale. Furthermore it was shown that there exist toy models of random combs with a length scale in their measure which exhibit this property;  those models should be regarded as  ``kinematical'' toy models because they do not have a completely local growth rule and are therefore not directly related to any model of quantum gravity. The construction is reviewed briefly in Section \ref{sec3} of this paper.

\begin{figure}
\begin{center}
\includegraphics[scale=0.5]{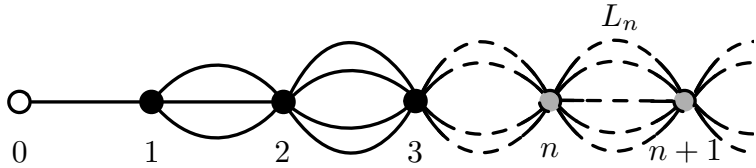}
\caption{An example of a multigraph.}
\label{multigraph_example}
\end{center}
\end{figure}

To address the second question in this paper we will analyse the spectral dimension of a variety of  \emph{multigraph ensembles} which we will define carefully in Section \ref{sec2}. An example of the graphs we consider is shown in Figure \ref{multigraph_example}; essentially they are constructed from the discretized half line by adding extra edges between adjacent vertices with some probability distribution (not necessarily i.i.d.) for the final number of edges $L_n$ between vertices $n$ and $n+1$.
%M
%
In the context of %
the CDT approach to quantum gravity %
multigraphs  were first introduced in \cite{Durhuus:2009sm} as part of the proof that $\dS\le 2$.
Figure \ref{fig2a} gives an example of a causal triangulation. 
Given an infinite causal triangulation we can obtain a multigraph by the  injective map in which we coalesce all vertices at a fixed distance $n$ from the root vertex 
into a single vertex, which becomes the vertex $n$ of the multigraph, while retaining only the edges whose two ends are at different distances from the root; the measure induced on the $\{L_n, n=0,1,\ldots\}$ is then inherited from the measure on the causal triangulation.  It can be shown that the spectral dimension of the multigraph is an upper bound on that of the causal triangulation which is believed to be tight.

Each  two dimensional causal triangulation is in bijection with a tree (see Figure \ref{fig2b}) and  the uniform measure on infinite causal triangulations is equivalent to the uniform measure on the GRT  \cite{Durhuus:2006vk} -- equivalently the measure of a critical Galton-Watson process conditioned on non-extinction \cite{Aldous1998,Athreya}.  Thus the measure on the $\{L_n\}$ is known 
and the spectral dimension of this multigraph ensemble was shown in \cite{Durhuus:2009sm} to be $\dS=2$ almost surely. Further, defining $B_N=\sum_{n=0}^N L_n$ one also has $\dH=2$ almost surely  thus  saturating the left hand side of \eqref{relationdsdh}. While this multigraph ensemble only gives an upper bound for the spectral dimension of the UICT, it is believed that both have the same spectral dimension.

The map from triangulation to multigraph applies to triangulations of any dimension although in dimensions greater than two  we do not know the resulting measure for $\{L_n\}$. In this paper we investigate the use of multigraph ensembles to provide an ``effective graph ensemble" description of at least some aspects of the full CDT ensembles, and combine it with the methods of \cite{Atkin:2011ak} to study scale dependent spectral dimension. In particular, we will give an analytic description in terms of this effective ensemble of the phenomenon of  scale dependent spectral dimension which has been observed in computer simulations of four-dimensional CDT \cite{Ambjorn:2005db} and find a concrete model with $d_s^0=2$ and $d_s^\infty=4$.

\begin{figure}[t]
\centering
\subfigure[A causal triangulation]{
\includegraphics[width=6cm]{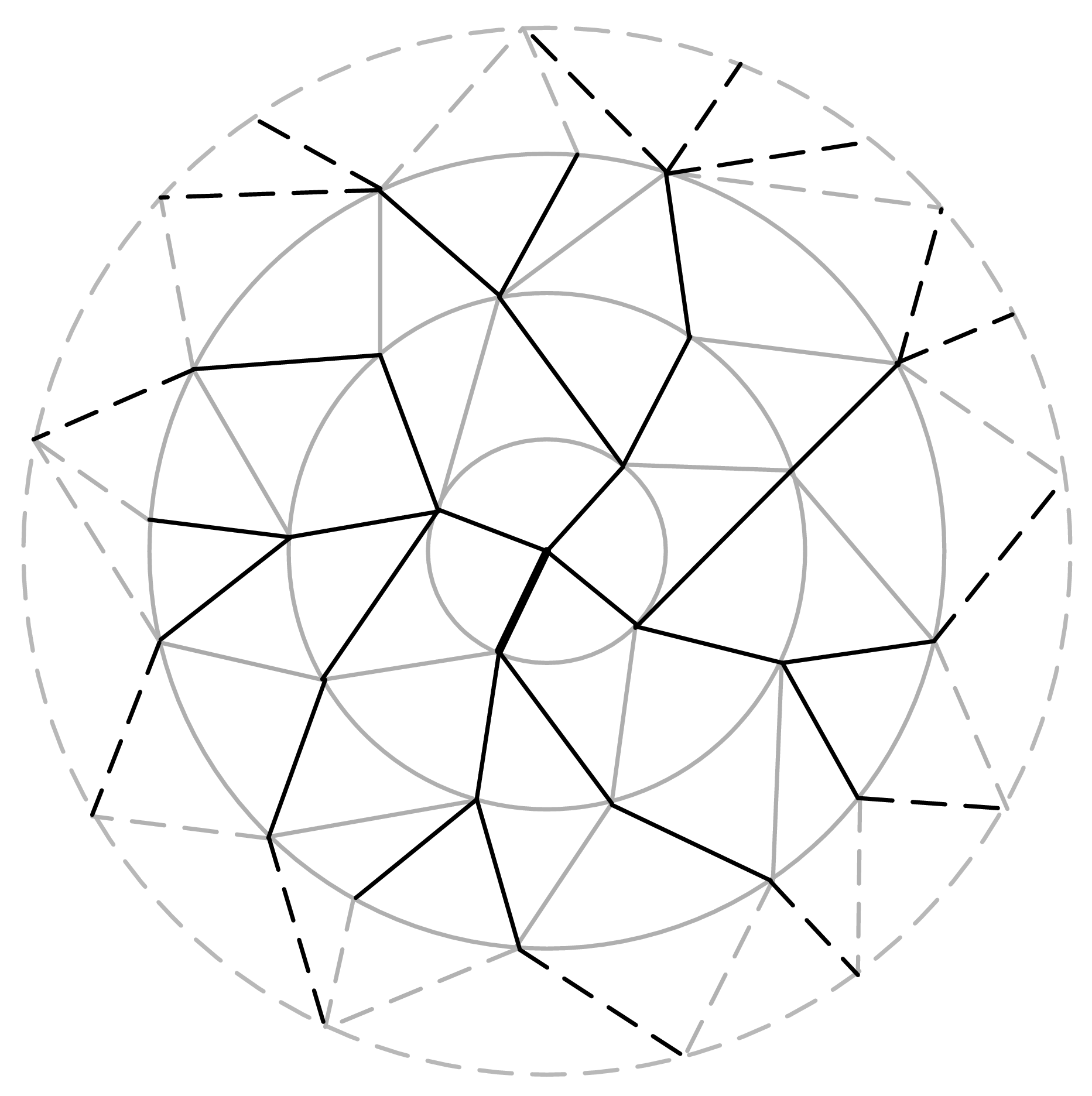}
\label{fig2a}
}
\subfigure[Tree bijection]{
\includegraphics[width=6cm]{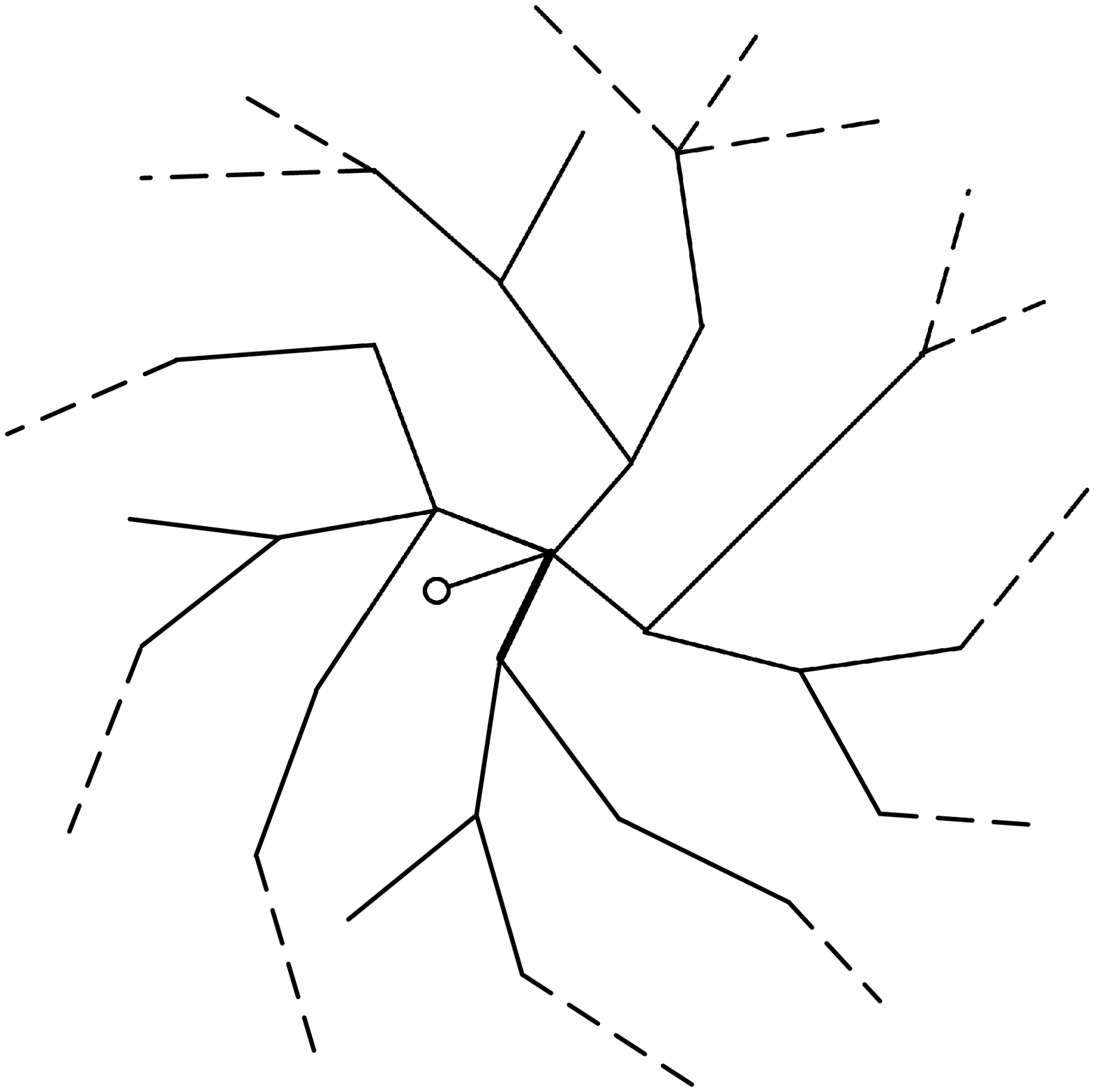}
\label{fig2b}
}
\caption{From causal triangulation to trees. The bijection works as follows using the definitions of Section \ref{sec2}.1; delete the rightmost edge of each vertex from $S_k$ to $S_{k+1}$ (time-like edges) and all edges connecting vertices at height $k$ (space-like edges) - deleted edges are drawn as grey lines. Add a vertex $r$ (empty circle) and mark an edge from $S_0$ to $S_1$ which is the rightmost edge with respect to the edge $(r, S_0)$ (fat line). Dashed lines encode the fact that both graphs are infinite and extend beyond finite height.}
\end{figure}

This paper is organized as  follows. After an introduction to multigraphs and the measure induced by Galton-Watson processes in Section \ref{sec2} and basic definitions regarding random walks and (scale dependent) spectral dimension in Section \ref{sec3}, we present our main results in the remaining sections: \\
\noindent\emph{Scale dependent spectral dimension in the recurrent case --} In Section \ref{sec4} we describe a multigraph ensemble based on a  critical Galton-Watson process  conditioned  never to die out
which has spectral dimension $d_s^0=1$ at short distances while at long distances $d_s^\infty=2$.\\
\noindent\emph{Properties of spectral dimension in the non-recurrent case --} In Section \ref{sec5} we discuss the general properties of transient multigraph ensembles restricting ourselves to the regime of physical interest with $2\leq \dH \leq 4$ and establish in particular the relationship
\beq
\dS=\frac{2\dH}{2+\rho},
\eeq 
where  $\rho$ is the resistance exponent.\\% 
\noindent\emph{Scale dependent spectral dimension in the non-recurrent case --} In Section \ref{sec6} we show that a multigraph ensemble with certain geometric properties motivated by the concrete model of Section \ref{sec4}  would have spectral dimension $d_s^0=2$ at distances much less than the Planck length while at  distances much greater than the Planck length $d_s^\infty=4$.

\section{Multigraphs and measures}\label{sec2}

In this section we define the various sets of graphs and probability measures that we will need later in the paper. For the reader's convenience we adhere closely to the definitions and notation of \cite{Durhuus:2006vk, Durhuus:2009sm} which may be referred to for further details.

\subsection{Multigraphs and trees}

Throughout this paper we will be concerned with the set ${\C R}$  %
 constructed  from the 
non-negative integers regarded as a graph so that $n$ has neighbours $n\pm 1$, 
except for 0, referred to as the root $r$, which only has 1 as a neighbour, and so that there 
are
$L_n(G)\ge1$ edges 
connecting $n$ and $n+1$.  A multigraph $G\in \C R$ is completely described by listing the number of edges $\{L_n(G), n=0,1,\ldots\}$; an example is shown
in Figure \ref{multigraph_example}.  A multigraph ensemble $\R M=\{\C R,\chi\}$ consists of the set of graphs $\C R$ together with a probability measure $\chi(\{G\in \C R: \C A\}) $ for the (finite) event $\C A$. 

A rooted tree $T$ is a connected planar graph consisting of vertices $v$ of finite degree connected by edges but 
containing no loops; the root $r$ is a  marked vertex connected to only one edge. Denoting the number of edges in a tree by $\abs T$, the 
set of all trees $\C T$ contains the set of finite trees ${\C T}_f =\bigcup_{N\in {\mathbb N}} {\C T}_N$ 
where ${\C T}_N=\{ T\in  {\C T} : {\abs T} =N\} $ 
and the set of infinite trees $\C T_\infty$. A tree ensemble $\R T=\{\C T,\mu\}$ consists of the set of graphs $\C T$ together with a probability measure $\mu(\{G\in \C T: \C B\}) $ for the (finite) event $\C B$.

Define the height $h(v)$ of a  vertex $v$ to be the graph distance from $r$ to $v$,  $S_k(G)$ to be the set of all vertices of  $G$ having height $k$,  $D_k(T)$ to be the set of edges in a tree $T$ connecting vertices at height $k-1$ to vertices at height $k$, and $B_k(G)$ to be the number of edges of $G$ in the ball of radius $k$ centred on the root  (so $h(r)=0$ in all cases, and for example $\abs{S_r(G\in \C R)}=1$). We then define a mapping $\gamma: \C T_\infty\to\C R$  which acts on a tree $T$ by  identifying all vertices $v\in S_k(T)$ with the single vertex $u$ but retaining all the edges. The resulting multigraph ensemble inherits its measure from the measure on $\C T_\infty$
so that  for  integers
$0\le k_1<\dots<k_m$ and positive integers $M_1,\dots, M_m$ 
\bea \lefteqn{\chi(\{G\in \C R: L_{k_i}(G)=M_i,\, i=1\ldots
  m\})}
  \nn\\
&=&\mu(\{T\in\C T_\infty:\abs{D_{k_i+1}(T)}=M_i,\, i=1\ldots m\}).\label{RRequivGRT}\eea
This mapping is well defined provided that the measure on $\C T_\infty$ is supported on trees with a unique path to infinity.  It is convenient to use trees to define some of  the ensembles of multigraphs we will be considering because we can thereby exploit many standard results.

\subsection{Branching Processes and Trees}

In a Galton Watson (GW) process each member of a given generation has $k$ offspring with probability $p_k$ where $p_0> 0$ and $p_i>0$ for at least one $i\ge 2$.
It is convenient to introduce the generating function 
\beq f(x)=\sum_{n=0}^\infty   p_n x^n 
\eeq
which satisfies $f(1)=1$. 
The process is called \emph{critical} if $f'(1)=1$ and \emph{generic} if $f(x)$ is analytic in a neighbourhood of the unit disk.

A GW process can be associated with a tree by identifying each member with a vertex and drawing edges between members and their offspring. The founder member is attached by an edge to a special vertex called the root. 
The probability distribution $\mu_\CGW$ on finite trees $\C T_f$ is then
\beq 
\mu_\CGW(T ) = \prod_{v\in T\setminus r} p_{\sigma_v-1},\label{treemeasure}
\eeq
where $\sigma_v$ is the degree of vertex $v$, and the ensemble $(\C T_f,\mu_\CGW)$ is called a critical Galton Watson tree.

The probability distribution $\mu_N$ on trees of a  fixed size $\C T_N$ is given by
\beq \label{muNdef}
\mu_N(T)=Z_N^{-1}\prod_{v\in T\setminus r} p_{\sigma_v-1}
\eeq
where
\beq 
Z_N=\sum_{T\in\C T_N}\prod_{v\in T\setminus r} p_{\sigma_v-1}.
\eeq
The {\it single spine trees} are the  subset $\C S$ of the infinite trees whose members consist of a 
single infinite linear chain, called the \emph{spine}, to each vertex of which are attached a 
finite number of finite trees by identifying their root with that vertex. 
%.
The following result was established in \cite{Durhuus:2006vk}. 
\begin{theorem}\label{thm1}
Assume that $\mu_N$ is defined as above as a probability measure on $\C T$ where $\{ p_n\}$ defines a
generic and critical GW process.  Then
\beq
 \mu_N\to\mu_\infty\quad as \quad N\to\infty\label{mudef}
 \eeq
where $\mu_\infty$ is a probability measure on $\C T$ concentrated on the set of single spine trees $\C S$.
The generating function for the probabilities for the number of finite
branches at a vertex on the spine is $f'(x)$. Moreover, the individual
branches are independently and identically distributed according to the original critical
GW process. 
\end{theorem}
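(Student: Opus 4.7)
The plan is to prove convergence of $\mu_N$ to the single-spine measure $\mu_\infty$ by verifying convergence on the algebra of finite-dimensional cylinder events, via explicit control of the partition function $Z_N$ combined with a spine decomposition. The three main ingredients will be: (i) the sharp asymptotic $Z_N\sim c\,N^{-3/2}$ with a constant $c>0$; (ii) a factorisation of the GW weight of any tree of size $N$ as the weight of its finite skeleton in the ball of radius $R$ times the weights of the subtrees hung off its height-$R$ vertices; and (iii) a one-big-jump concentration estimate for the convolution $\sum Z_{n_1}\cdots Z_{n_k}$.

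For (i), I would study the ordinary generating function $Z(x)=\sum_N Z_N x^N$. The GW weight formula \eqref{muNdef} yields a functional equation of the type $Z(x)=x\,f(Z(x))$ (up to minor adjustments for the root convention). Genericity of $f$ (analyticity in a neighbourhood of the unit disk) together with $f'(1)=1$ forces $Z(x)$ to have a dominant square-root singularity at $x=1$, and standard singularity analysis then gives $Z_N\sim c\,N^{-3/2}$ with $c$ determined by $f''(1)$. For (ii), any $T\in\C T_N$ lying in a cylinder event $\C B$ depending only on the ball $B_R$ about $r$ can be written uniquely as a skeleton $S=T\cap B_R\in\C B$ together with finite subtrees hung off the $k(S)$ vertices of $S$ at height $R$. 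The weight $\prod p_{\sigma_v-1}$ factorises, giving
\[
\mu_N(\C B)=Z_N^{-1}\sum_{S\in\C B}\mu_\CGW(S)\,W_N(S),\qquad W_N(S)=\sum_{n_1+\cdots+n_k=N-|S|}Z_{n_1}\cdots Z_{n_k}.
\]

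For (iii), I would exploit the asymptotic from (i) together with the criticality identity $\sum_m Z_m=1$ (which records that the critical GW process dies out almost surely). A Tauberian/convolution estimate then shows that the dominant contribution to $W_N(S)$ comes from configurations in which exactly one of the $n_i$ is of order $N$ and the remaining $k-1$ are $O(1)$, yielding $\lim_{N\to\infty}W_N(S)/Z_N=k(S)$. Substituting back, $\mu_N(\C B)\to\sum_{S\in\C B}\mu_\CGW(S)\,k(S)$. I would then identify the right-hand side with $\mu_\infty(\C B)$: the factor $k(S)$ corresponds to the choice, at each height, of which offspring of a spine vertex extends the spine, so absorbing it multiplies $p_m$ by $m$ and size-biases the offspring distribution at spine vertices. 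Since $f'(1)=1$ the size-biased probabilities are $mp_m$, and the induced distribution on the number of finite branches attached at a spine vertex (which is $m-1$ when the offspring count is $m$) has generating function
\[
\sum_{m\ge 1} m\,p_m\,x^{m-1}=f'(x),
\]
as claimed. The branches themselves, being disjoint from the spine, inherit the original GW weight $\prod p_{\sigma_v-1}$ unchanged and are therefore i.i.d.\ critical GW trees.

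The main obstacle is the convolution estimate in (iii): extracting exactly the coefficient $k(S)\,Z_N$ requires the $N^{-3/2}$ tail of $Z_N$ to be sharp, without logarithmic corrections that could compete once one sums over the $k-1$ small $n_i$'s. Genericity of $f$ is precisely what guarantees the correct square-root singularity of $Z(x)$, and hence both the sharp polynomial asymptotic and the dominated-convergence control of the sum over small subtrees that underlie the whole argument.
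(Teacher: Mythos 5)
Your proposal is correct and follows essentially the same route as the proof in \cite{Durhuus:2006vk}, which the paper simply quotes for Theorem \ref{thm1}: the functional equation $Z(x)=xf(Z(x))$ with genericity giving the square-root singularity and $Z_N\sim c\,N^{-3/2}$, the ball/skeleton decomposition of cylinder events, and the one-large-branch convolution estimate producing the factor $k(S)$ that size-biases the spine offspring distribution to $f'(x)$ with i.i.d.\ critical GW branches. The only (inconsequential) slip is bookkeeping in your factorisation: the skeleton weight should be the product of $p_{\sigma_v-1}$ over vertices of height strictly less than $R$ only, since the offspring weights of the height-$R$ vertices are already carried by the planted subtree weights $Z_{n_i}$, with the constraint correspondingly $n_1+\cdots+n_k=N-|S|+k$.
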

We call an ensemble  $(\C S , \mu_\infty)$   a generic random tree (GRT).  Denoting expectation values in the measure $\mu_X$ by $\expect{ \cdot }{X}$ some useful results on the attributes of the tree ensembles are summarized by

\begin{lemma}\label{StandardResults}%
For any critical Galton Watson ensemble and related  generic random tree ensemble,
\bea 
\expect {\abs{D_k}}{\infty} &=& (k-1)f''(1)+1,\quad k\ge 1,\label{Done}\\
\expect{B_k}{\CGW}&=&k,\quad k\ge 1,\label{BGW}\\
\expect{B_k}{\infty}&=&\half k(k-1)f''(1)+k,\quad k\ge 1,\label{BT}\\
%\eea
%Furthermore for the U distribution  \eqref{Udist}
%\bea
 \expect {\abs{D_k}^{-1}}{\infty} &=& \mu_{\CGW}(  D_k(T)> 0),\quad k\ge 1.\label{Dinv}\eea
%where in this case $f''(1)=2b(1-b)^{-1}$.
\end{lemma}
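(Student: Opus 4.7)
Equation \eqref{BGW} follows from criticality. Since the root has degree one, $|D_1(T)|=1$ identically; for $k\ge 2$, $|D_k(T)|$ is the size of generation $k-1$ of the critical Galton-Watson process founded by the root's unique child, so $\expect{|D_k|}{\CGW}=f'(1)^{k-1}=1$ and hence $\expect{B_k}{\CGW}=\sum_{j=1}^k 1=k$. For \eqref{Done} I would invoke the spine decomposition of Theorem~\ref{thm1}: a tree under $\mu_\infty$ consists of the spine $r=s_0,s_1,s_2,\dots$ together with, at each $s_j$ ($j\ge 1$), a random number of i.i.d.\ finite critical Galton-Watson branches whose count has generating function $f'$ and hence mean $f''(1)$. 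An edge at height $k$ is either the spine edge $s_{k-1}s_k$ or belongs to a branch attached at some $s_j$ with $1\le j\le k-1$, in which case it contributes to $|D_{k-j}|$ of that branch. Linearity, independence of branches, and $\expect{|D_{k-j}|}{\CGW}=1$ then give $\expect{|D_k|}{\infty}=1+(k-1)f''(1)$, and summing from $j=1$ to $k$ yields \eqref{BT}.

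The key identity is \eqref{Dinv}, which I would prove via the size-biasing relation
\beq\label{sizebias}
\expect{\phi(T)}{\infty}=\expect{|D_k(T)|\,\phi(T)}{\CGW},
\eeq
valid for every bounded $\phi$ that depends on $T$ only through $T|_{\le k}$. To establish \eqref{sizebias} it suffices to check it on cylinder events $\{T|_{\le k}=T_0\}$. Reading Theorem~\ref{thm1} as the size-biased Galton-Watson construction, the probability under $\mu_\infty$ that $T|_{\le k}=T_0$ and the spine visits a prescribed vertex $v\in D_k(T_0)$ factorises over the spine vertices $s_j$ ($1\le j\le k-1$): each $s_j$ with $n_j$ children in $T_0$ contributes the size-biased offspring probability $n_j p_{n_j}$ times the uniform spine-child probability $1/n_j$, together with the ordinary Galton-Watson weights \eqref{treemeasure} for the subtrees rooted at the $n_j-1$ non-spine children. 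The $n_j$ factors cancel, leaving exactly $\mu_\CGW(T|_{\le k}=T_0)$; summing over the $|D_k(T_0)|$ admissible $v$ then yields $\mu_\infty(T|_{\le k}=T_0)=|D_k(T_0)|\,\mu_\CGW(T|_{\le k}=T_0)$, which is \eqref{sizebias}. Applied with $\phi=|D_k|^{-1}$, well defined under $\mu_\infty$ because the spine edge $s_{k-1}s_k$ ensures $|D_k|\ge 1$ almost surely, this gives $\expect{|D_k|^{-1}}{\infty}=\expect{\B{1}_{|D_k|>0}}{\CGW}=\mu_\CGW(|D_k|>0)$.

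The main obstacle is the combinatorial bookkeeping needed to establish \eqref{sizebias}: one must simultaneously track the size-biased offspring weight $n_j p_{n_j}$, the uniform $1/n_j$ choice of spine continuation, and the planar ordering of the tree, so that after summing over the level-$k$ spine endpoints the weighting factor $|D_k(T_0)|$ emerges cleanly. Once \eqref{sizebias} is in hand, \eqref{Dinv} follows by a single substitution.
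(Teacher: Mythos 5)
Your proposal is correct, and it is actually more self-contained than the paper, which offers no in-text argument at all: it simply cites Appendix 2 of \cite{Durhuus:2006vk} for \eqref{Done}, \eqref{BGW} and \eqref{BT}, and Lemma 4 together with the proof of Lemma 5 of that reference for \eqref{Dinv}. What you reconstruct --- criticality giving $\expect{\abs{D_k}}{\CGW}=1$ and hence \eqref{BGW}; the spine decomposition plus independence of the branch count and branch contents for \eqref{Done}, summed over levels for \eqref{BT}; and the size-biasing cylinder identity $\mu_\infty(T|_{\le k}=T_0)=\abs{D_k(T_0)}\,\mu_\CGW(T|_{\le k}=T_0)$, applied to $\phi=\abs{D_k}^{-1}$, for \eqref{Dinv} --- is essentially the same machinery as in the cited source, so your route is a write-out of what the paper outsources rather than a genuinely different argument. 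One caveat: Theorem \ref{thm1} as stated here specifies only the law of the number of finite branches at a spine vertex (generating function $f'$) and that the branches are i.i.d.\ critical GW trees; it does not state that the spine successor is placed uniformly among the offspring, which your factor $1/n_j$ uses when you work with planar cylinder sets. That uniformity is part of the construction of $\mu_\infty$ in \cite{Durhuus:2006vk}, and it can also be sidestepped by observing that every quantity in the lemma depends only on $\abs{D_k}$, so it suffices that the law of $\abs{D_k}$ under $\mu_\infty$ is the size-biased law of $\abs{D_k}$ under $\mu_\CGW$; but if you want the proof to rest solely on the statement of Theorem \ref{thm1} as given in this paper, you should supply that extra input explicitly. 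This is a presentational gap, not a mathematical one.
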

\begin{proof}The proofs of \eqref{Done},  \eqref{BGW} and \eqref{BT}  are given in \cite{Durhuus:2006vk}, Appendix 2. The result \eqref{Dinv} uses Lemma 4 and  the proof of  Lemma 5 of \cite{Durhuus:2006vk}. \end{proof}

We will find it particularly useful to consider  the uniform process $U$ for which% 
\bea  p^U_k&=&\cases{ b,&$k=0,$\\      b^{k-1}(1-b)^2,& $k\ge 1$,}\label{Udist}\eea
with $0<b<1$. The generating function is
\bea f^U(x)=\sum_{k=0}^\infty p^U_k x^k=\frac{b+(1-2b)x}{1-bx}.\eea
The $r$'th iterate of $f^U$ is %known exactly,
 \bea f^U_r(x)=\frac{rb - (rb + b - 1)x}{1 - b + rb - rbx} \label{Riterate}\eea
and 
\beq {f^U}''(1)=\frac{2b}{1-b}.\eeq
We will denote by $\infty U$ the GRT measure associated with $U$ (equivalently this is the Galton Watson process described by $U$ and constrained never to die out).
It follows from Lemma \ref{StandardResults} and \eqref{Riterate} that %for the $U$ distribution  %\eqref{Udist}
\bea \expect {\abs{D_k}^{-1}}{\infty U} &=& \frac{1}{1+  (k-1)b(1-b)^{-1}} =\frac{1}{1+{f^U}''(1)(k-1)/2},\,\, k\ge 1.\label{U:Dinv}\eea

\section{Random walk and spectral dimension}\label{sec3}

In this section we % 
review the definition of spectral dimension and the nature of its scale dependence.  For the reader's convenience we adhere closely to the definitions and notation of \cite{Durhuus:2006vk, Durhuus:2009sm, Atkin:2011ak} which may be referred to for further details.

\subsection{Random walk}

We define a random walk on $G\in \C R$ by considering integer time steps labelled by $t$. A walker arriving at vertex $n\ne 0$ at time $t$ moves at time $t+1$ to vertex $n+1$ with probability $p_n(G)=L_n(G)(L_n(G)+L_{n-1}(G))^{-1}$ and to vertex $n-1$ with probability  $1-p_n$;
if the walk is at the root at time $t$ it moves to the vertex 1 with probability 1.  We denote by $\Omega_G$ the set of all walks on $G$, and by $\omega(t)$ the location of the walk $\omega\in\Omega_G$ at time $t$.

The generating function for the probability of first return to the root is given by
\bea\fl{ P_G(x)=\sum_{t,\omega\in\Omega_G} \Prob\left(\{\omega(t)=0\,\vert\, \omega(0)=0, \omega(t')>0, 0<t'<t\}\right)\,(1-x)^{\half t}}\eea
and for all returns to the root by
\beq Q_G(x)=\sum_{t,\omega\in\Omega_G} \Prob\left(\{\omega(t)=0\, \vert \, \omega(0)=0\right\})\,(1-x)^{\half t}.\eeq
The two generating functions are related by
\beq Q_G(x) = \frac{1}{1-P_G(x)}.\label{QPreln}\eeq
For two examples where one can determine an exact analytic expression for the generating function of the return probability see Appendix \ref{Solvable}. 

Now let the graph $G_n$ be obtained from $G$ by amputating the vertices $0,\ldots, n-1$ and all the edges attached to them, identifying the vertex $n$ of $G$ as the root of $G_n$ and relabelling the rest of the vertices in the obvious way.  Then because of the chain structure we have
\beq P_{G_n}(x)=\frac{(1-x)(1-p_{n+1}(G))}{1-p_{n+1}(G)P_{G_{n+1}}(x)}.\label{Precurr}\eeq
This leads  to the following useful result.
\begin{lemma}\label{monotonicity} {\bf (Monotonicity)}
For any $G\in\C R$ and $G'\in\C R$ which are identical except that  $L_k({G'})=L_k({G})-1$ for some $k>0$,
\beq P_{G}(x) < P_{G'}(x).\eeq
\end{lemma}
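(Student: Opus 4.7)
\medskip

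\noindent\textbf{Proof proposal.}
The plan is to iterate the recursion \eqref{Precurr} downward from a height above the modified edge. Since $G$ and $G'$ agree on all $L_j$ with $j\ne k$, amputating the first $k+1$ vertices produces identical graphs $G_{k+1}=G'_{k+1}$, so $P_{G_{k+1}}(x)=P_{G'_{k+1}}(x)=:P_\ast$.

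The perturbation of $L_k$ changes two consecutive transition probabilities in opposite directions, with $p_k(G')<p_k(G)$ but $p_{k+1}(G')>p_{k+1}(G)$. Since $P\mapsto (1-x)(1-p)/(1-pP)$ is decreasing in $p$, the single-step comparison using \eqref{Precurr} at $n=k$ actually gives $P_{G'_k}(x)<P_{G_k}(x)$, which is the reverse of the desired inequality. One therefore has to combine two steps of the recursion. Writing $A:=L_{k-1}$, $B:=L_k(G)$, $C:=L_{k+1}$ and $\alpha:=C(1-P_\ast)+B$, the first application gives
\[
P_{G_k}(x)=\frac{(1-x)B}{\alpha},\qquad P_{G'_k}(x)=\frac{(1-x)(B-1)}{\alpha-1}.
\]
Applying \eqref{Precurr} a second time at $n=k-1$ and cross-multiplying $P_{G'_{k-1}}(x)-P_{G_{k-1}}(x)$, I expect the numerator to collapse after cancellation to a positive multiple of
\[
x\alpha(\alpha-1)+(1-x)(B-\alpha)^2.
\]
This quantity is manifestly non-negative and strictly positive for $x>0$ (note $B\ge 2$ and $P_\ast\le 1$ force $\alpha\ge 2$), which yields $P_{G_{k-1}}(x)<P_{G'_{k-1}}(x)$.

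The inequality then propagates down to the root by routine monotonicity. For every $n\le k-2$ the coefficient $p_{n+1}$ depends only on $L_j$ with $j\le k-1$, hence $p_{n+1}(G)=p_{n+1}(G')$. The map $P\mapsto (1-x)(1-p)/(1-pP)$ is strictly increasing in $P$ on $[0,1)$ for $p\in(0,1)$, so iterating \eqref{Precurr} from $n=k-2$ down to $n=0$ preserves strict inequality, giving $P_G(x)<P_{G'}(x)$.

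The main obstacle I expect is the two-step algebraic identity: because the single-step comparison at $n=k$ has the wrong sign, one must identify precisely the cancellation that arises after combining heights $k$ and $k-1$. The sum-of-squares-plus-$x$ form above is what makes the net sign definite; everything before that identity is direct substitution, and everything after reduces to elementary monotonicity of the recursion.
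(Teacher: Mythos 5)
Your proposal is correct and takes essentially the same route as the paper: the paper likewise notes that a single step of \eqref{Precurr} is inconclusive, combines two applications of the recursion around the modified edge (after first reducing to the $k=1$ case by the same monotonicity-in-$P_{G_k}$ observation that you instead use at the end to propagate the inequality down to the root), and computes the difference explicitly. Your claimed two-step identity is in fact exact --- the numerator of $P_{G'_{k-1}}(x)-P_{G_{k-1}}(x)$, after clearing the positive denominators and the factor $(1-x)L_{k-1}$, equals $x\alpha(\alpha-1)+(1-x)(B-\alpha)^2$ with $\alpha\ge B\ge 2$ --- so the argument is complete.
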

\begin{proof} First note that from \eqref{Precurr} $P_{G_{k-1}}(x)$ is a monotonically increasing function of $P_{G_{k}}(x)$ and therefore $P_G(x)$ is a monotonically increasing function of $P_{G_{k}}(x)$; it then suffices to prove the $k=1$ case. This is easily done by applying \eqref{Precurr} twice in succession to express both $P_G(x)$ and $P_{G'}(x)$ in terms of $P_{G_2}(x)$, $L_0(G)$, $L_1(G)$  and  computing the difference. Note that the lemma is not true for the case $k=0$.
\end{proof}
An immediate consequence is that
\bea \eta_G(x)&=&\frac{Q_G(x)}{L_0(G)}\label{eta:def}\\
&<&\frac{1}{ x^{1/2}}.\label{eta:upper}\eea
To obtain this we reduce all $L_{n>0}$ to 1, repeatedly applying Lemma \ref{monotonicity} to obtain the upper bound 
\beq P_G(x)<P_{G^*}(x)\eeq
where $G^*$ is the graph with edge numbers  $\{L_0(G),1,1,1,\ldots\}$ and then compute  $P_{G^*}(x)$ explicitly by the methods of \cite{Durhuus:2005fq}. %. 
 Since by definition $Q_G(x)\ge 1$ we also have 
that
\beq  \eta_G(x)\ge\frac{1}{L_0(G)}.\label{eta:lower}\eeq

\subsection{Spectral dimension and scale dependence}

The graph spectral dimension controls the return probability of long walks on the graph and is defined to be
\beq d_s=-2 \lim_{t\to\infty}  \frac{\log\left(\Prob\left(\{\omega(t)=0\, \vert \, \omega(0)=0\right\})\right)}{\log t}\label{tdependence}\eeq
if the limit exists.  In this paper we work with generating functions and define  the  graph spectral dimension to be $d_s$ if, as $x\to 0$, 
\bea Q(x)	&\sim& x^{-1+d_s/2},\quad d_s< 2,\nn\\
	\abs{Q'(x)}	&\sim& x^{-2+d_s/2},\quad 2\le d_s< 4.\label{xdependence}\eea
In the case $d_s=2$ we expect  that, as $x\to 0$, $Q(x)$ is either logarithmically divergent if the graph is recurrent or finite if the graph is transient.
By a Tauberian theorem the definitions \eqref{tdependence} and \eqref{xdependence} are equivalent 
provided the limits exist.

The question of scale dependent  spectral dimension  was discussed in detail in  \cite{Atkin:2011ak} where it was shown that models with this property exist. Clearly if we just  consider short random walks  on a graph we only see the local discrete structure and cannot expect  any scaling behaviour. Instead we construct a model in which  limits are  taken in such a way that all walks are long in graph units but the measure $\mu$ depends on  a parameter $\Lambda$  which is continuously variable and sets a distance scale relative to which walks can be either short or long.  
%
%XXXXXXXXXXXXXXXXXX
The following result was proved in \cite{Atkin:2011ak}
\begin{lemma}\label{Lemma:scaling}
Assume that there exist constants $\Delta_\mu$ and $\Delta$ such that
\beq \label{scalingQ}
\tilde Q(\xi,\lambda)=\lim_{a\to 0} a^{\Delta_\mu} \expect{Q\left(x=\xi a,\Lambda=a^{-\Delta}\lambda^\Delta\right)}\mu\label{tildeQ:definition}\eeq
exists and is non-zero and the combination $\xi\lambda$ is dimensionless. Then \\
i) there exists a $\tau_0(\xi,\lambda)$ such that 
\bea\fl \tilde Q(\xi,\lambda)(1-e^{-\xi\lambda})< \lim_{a\to 0} a^{\Delta_\mu}\sum_{t=0}^{\lfloor\tau_0/a\rfloor}
\Prob\left(\{\omega(t)=n\, \vert \, \omega(0)=n\right\})\,(1-\xi a)^{\half t} <  \tilde Q(\xi,\lambda)\eea
and $\tau_0(\infty,\lambda)=\lambda$;\\
ii) there exists a $\tau_1(\xi,\lambda)$ such that 
\bea\fl \tilde Q(\xi,\lambda)-e\lambda^\half< \lim_{a\to 0} a^{\Delta_\mu}\sum_{t=\lceil\tau_1/a\rceil}^{\infty}
\Prob\left(\{\omega(t)=n\, \vert \, \omega(0)=n\right\})\,(1-\xi a)^{\half t} <  \tilde Q(\xi,\lambda)\eea
and $\tau_1(0,\lambda)=\lambda$.
\end{lemma}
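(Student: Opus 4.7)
The plan is to define truncated versions of the scaled generating function, verify their monotonicity and continuity in the cutoff parameter $\tau$, and then apply an intermediate value argument to locate the $\tau_0$ and $\tau_1$ realising the stated bounds.

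For each $\tau \geq 0$ I introduce the truncated head
\begin{equation*}
\tilde S_\tau(\xi,\lambda) = \lim_{a\to 0}a^{\Delta_\mu}\sum_{t=0}^{\lfloor\tau/a\rfloor}\Prob(\{\omega(t)=n\,\vert\,\omega(0)=n\})(1-\xi a)^{t/2}
\end{equation*}
and the complementary tail $\tilde T_\tau(\xi,\lambda) = \tilde Q(\xi,\lambda) - \tilde S_\tau(\xi,\lambda)$. The first step is to verify that this limit exists. The hypothesis \eqref{scalingQ} only supplies the $\tau = \infty$ case, so I would upgrade it via a Tauberian argument to a scaling statement about the summand itself, namely that $a^{\Delta_\mu + 1}\,\Prob(\{\omega(t) = n\,\vert\,\omega(0)=n\})$ converges to a continuum density $\rho(at)$ on $[0,\infty)$. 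Then $\tilde S_\tau$ reduces to $\int_0^\tau \rho(\sigma)e^{-\xi\sigma/2}\,d\sigma$, which is monotone non-decreasing and continuous in $\tau$ with $\tilde S_0 = 0$ and $\tilde S_\infty = \tilde Q(\xi,\lambda)$.

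Given these properties, both parts follow from the intermediate value theorem. For (i) the open interval $((1-e^{-\xi\lambda})\tilde Q,\tilde Q)$ is non-empty whenever $\xi,\lambda > 0$; its preimage under the continuous monotone map $\tau \mapsto \tilde S_\tau$ is therefore non-empty, and any $\tau_0$ in that preimage satisfies the required strict bounds. For (ii) an analogous argument uses that $\tilde T_\tau$ decreases continuously from $\tilde Q$ at $\tau = 0$, so one can choose $\tau_1$ small enough that $\tilde T_{\tau_1}$ lies strictly in $(\tilde Q - e\sqrt\lambda, \tilde Q)$. The endpoint identifications $\tau_0(\infty,\lambda) = \lambda$ and $\tau_1(0,\lambda) = \lambda$ come from analysing the two degenerate regimes of these preimages: as $\xi \to \infty$ the gap in the bound of (i) closes so $\tau_0$ is forced to the intrinsic scaling cutoff of $\rho$, which by construction of $\Lambda = a^{-\Delta}\lambda^\Delta$ sits at $\sigma = \lambda$; symmetrically, as $\xi \to 0$ the damping factor becomes trivial and only the measure-imposed scale $\lambda$ remains to separate head from tail.

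The principal obstacle is the first step: \eqref{scalingQ} controls the full generating function but not its truncations, and passing to partial sums requires either a Tauberian translation to $p(t)$ directly or a dominated-convergence argument based on the a priori bound \eqref{eta:upper} for $\eta_G$. Both approaches ultimately rest on showing that the convergence in \eqref{scalingQ} is uniform enough in $a$ to permit interchange of the truncation with the $a \to 0$ limit. Once the truncated scaling limit is established as a continuous monotone function of $\tau$, the remainder of the argument is a textbook application of the intermediate value theorem.
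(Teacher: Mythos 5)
There is a genuine gap here, on two levels. First, a point of orientation: this paper does not prove Lemma \ref{Lemma:scaling} itself but imports it from \cite{Atkin:2011ak}, so your proof has to stand on its own, and its first step already does not. The hypothesis \eqref{scalingQ} gives convergence of the (averaged, rescaled) generating function for each $\xi$; your proposed ``Tauberian upgrade'' to pointwise convergence of $a^{\Delta_\mu+1}\Prob(\omega(t)=n\,\vert\,\omega(0)=n)$ to a continuum \emph{density} $\rho(at)$ does not follow from this. At best, a continuity-theorem argument for Laplace transforms gives vague convergence of the measures $a^{\Delta_\mu}\sum_t\avg{p(t)}\delta_{at}$, which controls truncated sums only at continuity points of a limit measure that you have not shown is atom-free; Karamata-type Tauberian theorems require regular variation of $\tilde Q$ in $\xi$, which is exactly what fails in a model with scale-dependent spectral dimension. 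So the object $\tilde S_\tau$ on which your whole intermediate-value argument rests is not known to exist or to be continuous in $\tau$.

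Second, and more seriously, even granting $\tilde S_\tau$, the IVT step only produces \emph{some} cutoff $\tau_0$ landing in the interval, and this empties the lemma of its content: the point of the statement is the quantitative form of the bounds, $(1-e^{-\xi\lambda})\tilde Q$ and $\tilde Q-e\lambda^{\half}$, together with the identifications $\tau_0(\infty,\lambda)=\lambda$ and $\tau_1(0,\lambda)=\lambda$, which are what justify reading $\tilde Q$ at large/small $\xi$ as counting walks shorter/longer than the continuum scale $\lambda$. Your argument for these boundary values — that the ``intrinsic scaling cutoff of $\rho$ sits at $\sigma=\lambda$'' — is unfounded: the limit measure generically has support on all of $[0,\infty)$ and no cutoff at $\lambda$; and in part (ii) ``choose $\tau_1$ small enough'' would drive $\tau_1$ towards $0$, not towards $\lambda$, and cannot produce the specific error $e\lambda^{\half}$. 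What is actually needed are explicit head/tail estimates made directly at finite $a$ (so that no interchange of truncation and the $a\to 0$ limit is required): the tail beyond $t\sim\tau/a$ is suppressed relative to $\tilde Q$ by an explicit factor of order $e^{-\xi\tau/2}$, obtained from positivity and elementary manipulation of the weights $(1-\xi a)^{\half t}$ (together with the monotonicity in $t$ of return probabilities for reversible walks), while the head below $t\sim\lambda/a$ is bounded using the universal estimate \eqref{eta:upper}, i.e.\ $Q_G(x)<L_0\,x^{-1/2}$, equivalently $p(t)\lesssim t^{-1/2}$, which is precisely where the constant $e\lambda^{\half}$ and the value $\tau_1(0,\lambda)=\lambda$ come from. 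Without estimates of this kind, pinned to the scale set by $\Lambda=a^{-\Delta}\lambda^{\Delta}$, the existence statement you prove is much weaker than the lemma.
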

So as $\xi\to\infty$ we see that $ \tilde Q(\xi,\lambda)$ describes walks of continuum duration less than $\lambda$ and that as $\xi\to 0$ $ \tilde Q(\xi,\lambda)$ describes walks of continuum duration greater than $\lambda$ (provided it diverges in that limit). The spectral dimensions $d_s^\infty$ and $d_s^0$ in the long and short walk limits respectively are then defined by
\bea d^\infty_s&=& 2\left(1+\lim_{\xi\to 0} \frac{\log(\tilde Q(\xi;\lambda))}{\log \xi}\right),\nn\\
d^0_s&=& 2\left(1+\lim_{\xi\to\infty} \frac{\log(\tilde Q(\xi;\lambda))}{\log \xi}\right),\label{dsdef}
\eea
provided these limits exist (which is not at all assured but it was demonstrated in \cite{Atkin:2011ak} that there are models in which all the required limits do exist). 
Lemma \ref{Lemma:scaling} and \eqref{dsdef} yield the spectral dimension if the graphs are recurrent. It is straightforward to check that for transient graphs with $\dS\le 4$ an exactly analogous result relates $ \partial_\xi \tilde Q(\xi,\lambda)$ and $ \partial_x  Q(x,\Lambda)$ and that 
\bea d^\infty_s&=& 2\left(2+\lim_{\xi\to 0} \frac{\log|\partial_\xi\tilde Q(\xi;\lambda)|}{\log \xi}\right),\nn\\
d^0_s&=& 2\left(2+\lim_{\xi\to\infty} \frac{\log|\partial_\xi\tilde Q(\xi;\lambda)|}{\log \xi}\right).\label{trdsdef}
\eea

\section{Scale dependent spectral dimension in the recurrent case}  \label{sec4}

The model we consider in this section is a multigraph ensemble whose measure is related to the uniform GRT measure $\infty U$ through \eqref{RRequivGRT} and whose graph distance scale is therefore set by the parameter $b$. The weighting of the multigraphs in this ensemble can be related to an  action for the corresponding  CDT ensemble which contains a coupling to the absolute value of the scalar curvature. To see this first note that the probability for a finite tree $T$ in the $U$ ensemble is given by
\bea b^{\sum_{v\in T}\abs{\sigma_v-2}}\,(1-b)^{2\sum_{v\in T}1-\delta_{\sigma_v,1}}.\eea
The quantity $\abs{\sigma_v-2}$ is in fact the one dimensional analogue of the absolute value of the scalar curvature; small $b$ suppresses all values of vertex degree except $\sigma_v=2$. The causal triangulation $C$ which is in bijection to this tree  has probability
\bea b^{\sum_{v\in C}\abs{\sigma_{f(v)}-2}}\,(1-b)^{2\sum_{v\in C}1-\delta_{\sigma_{f(v)},1}},\eea
where $\sigma_{f(v)}$ is the number of  `forward' edges connecting vertex $v$ to vertices whose distance from the root is one greater than that of $v$.  Taking into account the causal constraint the effect of small $b$ on the triangulation is to suppress all vertex degrees except $\sigma_v=6$ so that $-\log b$ plays the role of a coupling to a term $\sum_v\abs{R_v}$, essentially the integral of the absolute value of the scalar curvature, in the action for the CDT.

The main result of  this section is
\begin{theorem}\label{VariableSpectralDimensionRecurrent}
The scaling limit of the multigraph ensemble  with measure $\infty U$
has spectral dimension $d_s^0=1$ at short distances and  $d_s^\infty=2$ at long distances.

\end{theorem}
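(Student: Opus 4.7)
The plan is to identify $\expect{Q_G(x)}{\infty U}$ in the scaling limit with the Green's function at the origin of a one-dimensional diffusion whose position-dependent conductance $\widehat{L}(n)$ equals $1$ for $n\ll 1/b$ and grows linearly ($\propto bn$) for $n\gg 1/b$, a profile dictated by the exact expectation $\expect{L_n^{-1}}{\infty U}=(1-b)/(1-b+nb)$ from \eqref{U:Dinv}. First I would iterate the recursion \eqref{Precurr} to express $Q_G(x)$ as a continued fraction in the $\{L_n\}$, then use the monotonicity Lemma \ref{monotonicity} to sandwich $Q_G$ between values obtained by replacing the $L_n$ with tractable deterministic or typical sequences whose averages are controlled by \eqref{U:Dinv}.

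Fixing the scaling comes next. Since a random walk of duration $t$ typically reaches distance $\sim\sqrt{t}$, the crossover distance $1/b$ corresponds to a crossover time $t\sim 1/b^2$ and hence $x\sim b^2$; this motivates the identifications $x=\xi a$ and $b=\sqrt{a/\lambda}$, i.e.\ $\Delta=\half$ in Lemma \ref{Lemma:scaling}, with $\Delta_\mu=\half$ determined by requiring the short-walk chain asymptotic $Q(x)\sim x^{-1/2}$ to produce a nontrivial limit. The expected return generating function then converges to the Green's function $G(0;x)$ of the ODE $(\widehat{L}G')'=x\,\widehat{L}G$ on the half-line with a reflecting source condition at the origin, which I would solve by patching: for $n<1/b$, $\widehat{L}\equiv 1$ and $G$ is a combination of $e^{\pm\sqrt{x}\,n}$, while for $n>1/b$, $\widehat{L}(n)\propto n$ reduces the equation to the modified Bessel equation of order zero $G''+G'/n=xG$ with decaying solution $\propto K_0(\sqrt{x}\,n)$. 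Matching $G$ and $\widehat{L}G'$ at $n=1/b$ with the source condition yields $Q(x)\sim x^{-1/2}$ for $x\gg b^2$ and $Q(x)\sim b^{-1}\log(b/\sqrt{x})$ for $x\ll b^2$, the latter via $K_0(z)\sim -\log z$. Inserting the scaling ansatz gives $\tilde{Q}(\xi,\lambda)\sim\xi^{-1/2}$ as $\xi\to\infty$ and $\tilde{Q}(\xi,\lambda)\sim\half\sqrt{\lambda}\,\log(1/(\xi\lambda))$ as $\xi\to 0$, so \eqref{dsdef} produces $d_s^0=1$ and $d_s^\infty=2$.

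The principal obstacle is the rigorous passage from the random discrete model to the deterministic continuum ODE, because $\{L_n\}$ is not i.i.d.\ under $\infty U$ but inherits correlations from the single-spine branching structure of Theorem \ref{thm1}, so $\expect{Q_G}{\infty U}$ is not a direct function of $\{\expect{L_n^{-1}}{\infty U}\}$. I expect the crucial technical step to be concentration of the partial sums $\sum_{k=0}^n L_k^{-1}$, which govern the effective resistance seen by the walk, around their means $\sim b^{-1}\log(1+nb)$; Jensen's inequality combined with Lemma \ref{monotonicity} should deliver one of the two required bounds on $\expect{Q_G}{\infty U}$ almost immediately, while the opposite bound is likely to require either a careful term-by-term continued-fraction expansion controlled by \eqref{U:Dinv} or a direct concentration argument exploiting that $|D_k|$ under $\infty U$ evolves as a critical branching process with immigration.
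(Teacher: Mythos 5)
Your scaling identification ($x=a\xi$, $b=\sqrt{a/\lambda}$, $\Delta_\mu=\Delta=\half$) and the predicted crossover coincide with the paper's, and the ODE/Bessel matching does reproduce the correct asymptotics $\tilde Q\sim\xi^{-1/2}$ and $\tilde Q\sim\lambda^{1/2}\abs{\log\xi\lambda}$. But the proof has a genuine gap exactly where you flag it, and the devices you hope will close it do not suffice. The hard direction is the \emph{upper} bound on $\expect{Q(x)}{\infty U}$ in the long-walk regime $x\ll b^2$: the monotonicity lemma only gives bounds of the type \eqref{eta:upper}, i.e. $Q_G(x)<L_0(G)\,x^{-1/2}$, which after averaging is the short-walk behaviour and is far too weak as $\xi\to0$; and concentration of the resistance $\sum_k L_k^{-1}$ does not by itself bound the return generating function from above, since near $x=0$ one must control the trade-off between escape beyond a scale $R$ and the resistance accumulated over the first $R$ vertices. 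The paper's mechanism is a cutting argument: choose a vertex $v\le R$ with $Q(x;v)<2/(xR)$, split walks into those confined below $v$ and those reaching $v$, bound the return factor of the latter using Lemma \ref{monotonicity}, and obtain the deterministic inequality $Q(x)<\frac{1}{xR}+\sum_{n=0}^{R}L_n^{-1}$ (equation \eqref{upperbound}); only then are expectations taken, using $\expect{L_n^{-1}}{\infty U}$ from \eqref{U:Dinv}, with the choice $R=b/x$. Nothing in your proposal produces an inequality of this form, and neither a term-by-term continued-fraction expansion nor concentration of partial resistances obviously substitutes for it.

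Two further points. First, the identification of $\expect{Q(x)}{\infty U}$ with the Green's function of a deterministic equation with conductance $\widehat L(n)$ is precisely what needs proof; under $\mu_\infty$ the $L_n$ are correlated (critical branching with immigration along the spine), and the paper never takes such a continuum limit --- it works entirely with two-sided discrete bounds, the long-walk \emph{lower} bound coming from iterating the recursion to get \eqref{EtaLower}, inserting $\eta<x^{-1/2}$, and applying Jensen with the exact moments of Lemma \ref{StandardResults}, which is essentially the bound you correctly expect to be easy. Second, the short-walk lower bound $\tilde Q\gtrsim\xi^{-1/2}$ also needs a concrete device: the paper restricts to the event that the first $N=b^{-1}$ links carry a single edge (probability $(1-b)^{2N}\to e^{-2}$) and walks confined to that segment, using the explicit generating function \eqref{PhalflineLf}; your ODE patching only indicates the answer. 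So the skeleton, the scaling and the final exponents are right, but the two quantitative bounds that actually constitute the proof are missing.
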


\begin{proof}  Taking $\Lambda=b^{-1}$ and $\Delta_\mu=\Delta=\half$ in \eqref{tildeQ:definition} we obtain
   \bea
\tilde Q(\xi,\lambda)&\sim&\cases{ \xi^{-\half},&$\xi\gg\lambda^{-1},$\\ \lambda^\half\abs{\log \lambda\xi},& $\xi \ll\lambda^{-1},$\label{newthm1}}
\eea
the proof of which follows. The theorem then follows from Lemma \ref{Lemma:scaling} and the definitions \eqref{dsdef}.
\end{proof}

\subsection{Short distance behaviour: $\xi\to\infty$}

By the monotonicity lemma we have 
\beq Q(x,b)<x^{-\half}\eeq
from which it follows immediately that
\beq \tilde Q(\xi,\lambda) < \xi^{-\half}.\eeq
We can obtain a lower bound for the expectation value by considering only the contribution of graphs for which the first $N$ vertices have only one edge and  walks which get no further than $N$; then using \eqref{PhalflineLf} we get
\beq \expect{Q(x,b)}{\infty U}>(1-b)^{2N}x^{-\half}\frac{(1+x^\half)^N-(1-x^\half)^N}{(1+x^\half)^N+(1-x^\half)^N}.\eeq
Setting $N=b^{-1}$ we find 
\beq  \tilde Q(\xi,\lambda) > \xi^{-\half} e^{-2}\tanh(\sqrt{\xi\lambda})\eeq
which establishes  the first part of \eqref{newthm1}.%

\subsection{Lower Bound as $\xi\to 0$}

From now on to improve legibility  we will adopt the following simplified notation whenever it does not lead to ambiguity; for $\eta_{G_n}(x)$ we will write $\eta_n(x)$ and for $L_n(G)$ we will write $L_n$. We will also suppress the second argument $b$ in $\eta_n(x)$.

From \eqref{QPreln} and \eqref{Precurr} we obtain
\beq \eta_n(x)=\frac{\eta_{n+1}(x)+\frac{1}{L_n}}{1+xL_n\eta_{n+1}(x)}\label{start}\eeq
which can be iterated
to give 
\bea \eta_n(x)&=&\frac{\eta_{N}(x)}{ \prod_{k=n}^{N-1}(1+xL_k\eta_{k+1}(x))}   +\sum_{k=n}^{N-1}\frac{1}{L_k}\frac{1}{\prod_{m=n}^{k}(1+xL_m\eta_{m+1}(x))}\nn\\
		&>&\sum_{k=n}^{N}\frac{1}{L_k}\exp(-\sum_{m=n}^{k}xL_m\eta_{m+1}(x)),\label{EtaLower}
\eea
where we have used $\eta_N(x)\ge 1/L_N$. Using the monotonicity lemma \eqref{eta:upper}  and 
Jensen's inequality  gives
\beq \expect{\eta_0(x)}{\infty U}>
\sum_{n=0}^{N}\frac{1}{\expect{L_n}{\infty U}}\prod_{k=0}^{n}e^{-\sqrt{x}\expect{L_k}{\infty U}}.\eeq
Using Lemma \ref{StandardResults} and \eqref{U:Dinv} gives %Now $\avg{L_k}_{\infty U}=1+{f^U}''(1)k$ so
\bea \expect{\eta_0(x)}{\infty U}&>&
\sum_{n=0}^{N}\frac{1}{1+{f^U}''(1)n}
e^{-\sqrt{x}\sum_{k=0}^{n}1+{f^U}''(1)k}\\
&=&\sum_{n=0}^{N}\frac{1}{1+{f^U}''(1)n}
e^{-\sqrt{x}(n+1)(1+\half {f^U}''(1)n)}\nn\\
&>&e^{-\sqrt{x}(N+1)(1+\half {f^U}''(1)N)}  \sum_{n=0}^{N}\frac{1}{1+{f^U}''(1)n}\nn\\
&>&\frac{1}{{f^U}''(1)} e^{-\sqrt{x}(N+1)(1+\half {f^U}''(1)N)}\log \left(1+{f^U}''(1)N\right).
\eea
Now let $N=\lfloor b^{-\half} x^{-\quarter}\rfloor$ and set  $b=a^\half\lambda^{-\half}$, $x=a\xi$ so
\bea \tilde Q(\xi,\lambda)&=&\lim_{a\to 0}a^\half \expect{\eta_0(a\xi)}{\infty U} \\
&>&\frac{1}{2}{\lambda^\half}e^{-1-(\xi\lambda)^\quarter}\log\left( 1+\frac{2}{(\xi\lambda)^\quarter}\right)\label{longlower}\eea
which diverges logarithmically as $\xi\to 0$.

\subsection{Upper Bound as $\xi\to 0$}

\begin{figure}
\begin{center}
\includegraphics[width=14cm]{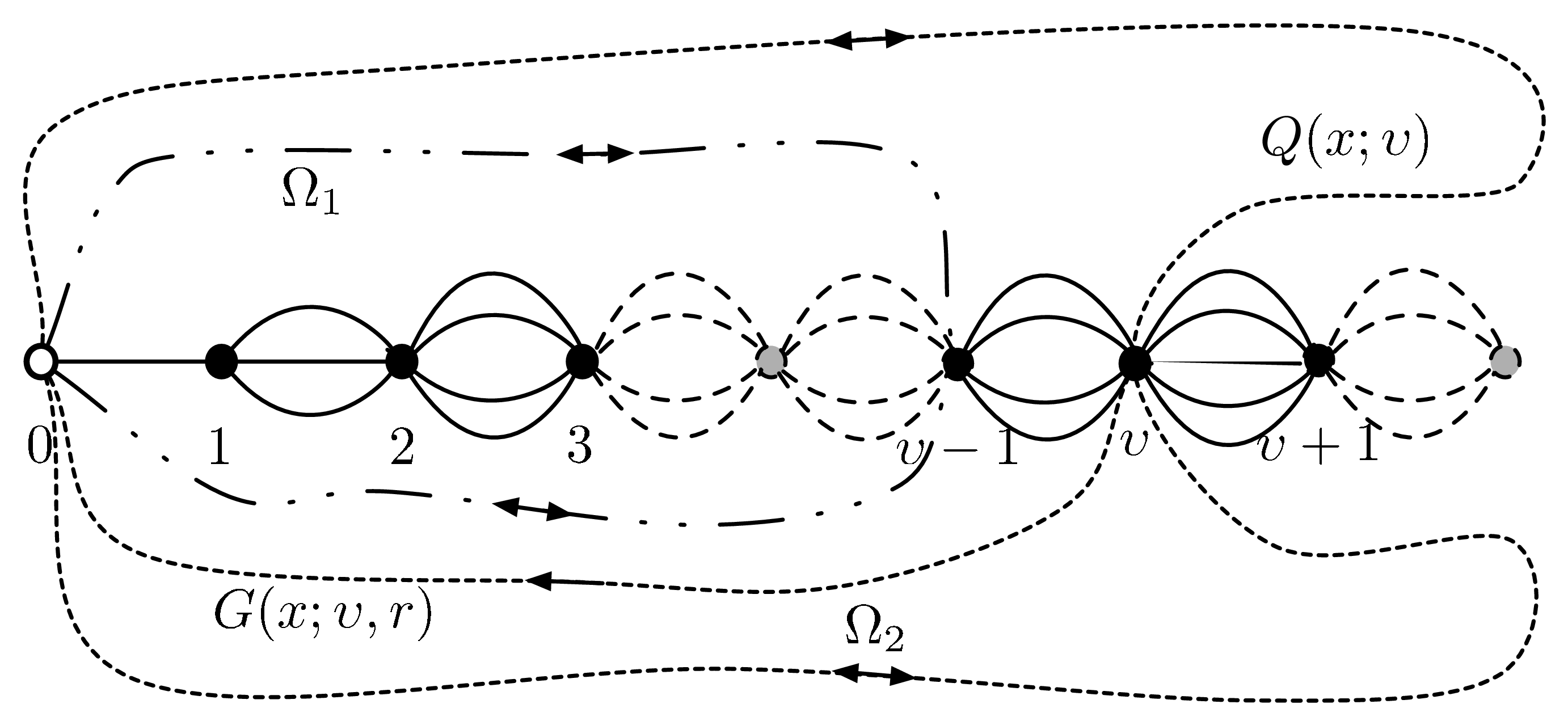}
\caption{Decomposition of a random walk into sets $\Omega _1$ and $\Omega _2$; $\Omega _1$ (two dots-one dash line) corresponds to walks that do not move beyond vertex $\upsilon -1$. $\Omega _2$ (dashed line) corresponds to walks that reach at least vertex $\upsilon$. Double arrows encode the fact that random walk can return to the point of origin and single arrow means that the walker cannot revisit its starting point.}
\label{omega}
\end{center}
\end{figure}

This proceeds by a fairly standard argument.
Let $p_t(r;n)$ be the probability that a walk starting at the root at time zero is at vertex $n$ at time $t$ and define
\beq Q(x;n)=\sum_t p_t(r;n)  (1-x)^{\half t}.\eeq
Then
\beq \sum_{n\in B(R)} Q(x;n) <\frac{2}{x}\eeq
so it follows that there is a vertex $v\le R$ such that
\beq Q(x;v)< \frac{2}{x R}.\eeq
Now consider the walks contributing to $Q(x)$ and split them into two sets; $\Omega_1$ consisting of those reaching no further than $v-1$,  and $\Omega_2$ consisting of those reaching at least as far as $v$ (see Figure \ref{omega}). Then we have 
\beq Q(x) = Q_{\Omega_1}(x)+ Q_{\Omega_2}(x).\eeq
$Q_{\Omega_2}(x)$ can be written
\beq Q_{\Omega_2}(x)=Q(x;v)\frac{L_{v-1}}{L_v+L_{v-1}} G(x;v,r)\eeq
where $G(x;v,r) $ generates walks which leave $v$ and return to $r$ without visiting $v$ again. $G(x;v,r)$ can be bounded by decomposing the walks as follows; leave $v$, go to $v-1$, do any  number of returns to $v-1$, leave $v-1$ for the last time, go to $v-2$ etc which gives
\beq G(x;v,r)=(1-x)^{-\half}P(x;v)\frac{L_{v-2}}{L_{v-1}}G(x;v-1,r)\eeq
where $P(x;v)$ is the first return generating function for walks that leave $v$ towards the root. Iterating gives (where the root is labelled 0)
\bea G(x;v,r)	&=&\prod_{k=v}^2(1-x)^{-\half}P(x;k)\frac{L_{k-2}}{L_{k-1}}\\
			&=& \frac{L_{0}}{L_{v-1}}(1-x)^{-\half(v-2)}\prod_{k=v}^2P(x;k).%\\
\eea
One can then use the monotonicity Lemma to bound the $P(x;k)$ by reducing the multigraph to  $G_k^*=\{1,1,1,...,L_{k-1}(G),L_{k}(G),...\}$  which yields % 
\beq
P(x;k) \leq P^*(x;k)=\frac{(1-x)\frac{ L_{k-1}}{1+L_{k-1}} }{1-\frac{1}{1+L_{k-1}} P_{k-1}(x)  } \leq (1-x)
\eeq 
where in the last inequality we used that $P_{k-1}(x)$, the first return generating function for walks on the line segment  of length $k-1$, is  bounded above by 1. Thus we have
\bea Q_{\Omega_2}(x)	&<& \frac{2}{x R}\frac{L_{0}}{L_v+L_{v-1}}          (1-x)^{ \half R} \\%
					&<& \frac{1}{x R}.    %   
					\eea
$Q_{\Omega_1}(x)$ is bounded using \eqref{start} by
\beq Q_{\Omega_1}(x)=\eta_ {\Omega_1 0}<\eta_ {\Omega_1 v-2}+\sum_{n=0}^{v-3}\frac{1}{L_n}.\eeq
Now
\bea\eta_ {\Omega_1 v-2}	&=&\frac{1}{L_{v-2}}\;\frac{1}{1-\frac{L_{v-2}(1-x)}{L_{v-1}+L_{v-2}}} \\
			&=&\frac{1}{L_{v-2}} \;\frac{L_{v-1}+L_{v-2}}{L_{v-1}+xL_{v-2}}\\
			&<&\frac{1}{L_{v-2}}+\frac{1}{L_{v-1}}
\eea
so
\bea Q_{\Omega_1}(x)	&<&\sum_{n=0}^{v-1}\frac{1}{L_n}\\
					&<&  \sum_{n=0}^{R}\frac{1}{L_n} \eea
and altogether
\beq \label{upperbound}Q(x)<\frac{1}{x R}    %   
+\sum_{n=0}^{R}\frac{1}{L_n}.\eeq
Taking expectation values
\bea \expect{Q(x)}{\infty U}	&<&\frac{1}{x R}   %    
+\sum_{n=0}^{R}\frac{1}{1+f''(1)n/2}\\
			&<& \frac{1}{x R}    %  
			+1+\frac{2}{f''(1)}\log( 1+f''(1)R/2 ).\eea
Finally let $R=bx^{-1}$   and set $b=a^\half\lambda^{-\half}$, $x=a\xi$ so %, 
\bea \tilde Q_1&=&\lim_{a\to 0}a^\half \avg{Q(a\xi)}\\
&<&{\lambda^\half}\left(1+ \log\left(   1+\frac{1}{\xi\lambda}  \right) \right).\eea
Together with \eqref{longlower} this establishes the second part of  \eqref{newthm1}.% 

\section{Properties of spectral dimension in the transient case}\label{sec5}
 
Throughout this section we will assume that with measure $\mu=1$  there exist constants $c$ and $N_0$ such that
\beq B_{2N}(G)<c B_N(G),\quad N>N_0.\label{Ball:condition}\eeq
This rules out exponential growth for example.
Then a graph G has Hausdorff dimension $\dH$ if
\beq B_N(G)\sim N^{\dH}\eeq
and an ensemble $\R M=\{\C R,\mu\}$ has Hausdorff dimension $\dH$ if
\beq \expect{B_N(G)}{\mu}\sim N^{\dH}.\eeq
We will confine our considerations to the case $2\le \dH \le 4$ as being the regime of most physical interest. 

\subsection{Resistance and transience}

Here we note a lemma which describes the relationship between the resistance to infinity on $G$ and transience.

\begin{lemma}
The electrical resistance, assuming each edge has resistance 1, from $n$ to infinity is given by 
\beq \eta_n(0)=\sum_{k=n}^\infty \frac{1}{L_k}.\label{resistance}\eeq
If $\eta_n(0)$ is finite the graph is transient.
\end{lemma}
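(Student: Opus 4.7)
The plan is to invoke the random-walk/electrical-network dictionary, specialised to the chain geometry of $\C R$, and then evaluate the resulting effective resistance by series-parallel reduction.

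The walk defined in Section~\ref{sec3} is exactly the reversible walk on $G$ with unit conductance on each edge: at a vertex $n\ge 1$ it steps to $n\pm 1$ with probability proportional to $L_n$, respectively $L_{n-1}$, and the deterministic step from the root agrees with the reversible transition because all conductance out of the root goes to vertex $1$. The standard identity $G(v,v)=\pi(v)\,R_{\mathrm{eff}}(v,\infty)$, with $\pi(v)$ the total conductance at $v$, applied to the root of $G_n$ (where $\pi(\mathrm{root})=L_n$), gives $Q_{G_n}(0)=L_n\,R_{\mathrm{eff}}(n,\infty)$, and hence by \eqref{eta:def},
$$\eta_n(0)=R_{\mathrm{eff}}(n,\infty).$$
Evaluating the right-hand side is routine: the $L_k$ unit resistors in parallel between levels $k$ and $k+1$ collapse to $1/L_k$, and these collapsed resistors sit in series from $n$ outwards, so the monotone limit of effective resistances to finite cutoffs yields $R_{\mathrm{eff}}(n,\infty)=\sum_{k=n}^\infty 1/L_k$, which is \eqref{resistance}.

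The transience claim is then the standard equivalence between finiteness of the diagonal Green's function and transience of a reversible walk: since $Q_G(0)=L_0\,\eta_0(0)$ is the expected number of visits of the walk to the root, finiteness of $\eta_0(0)$ (equivalently of $\eta_n(0)$ for every $n$) characterises transience.

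The only non-trivial input is the identity $Q_{G_n}(0)=L_n\,R_{\mathrm{eff}}(n,\infty)$, which I would quote from a standard text. A fully self-contained alternative is to iterate the recursion $\eta_n(0)=\eta_{n+1}(0)+1/L_n$ obtained by setting $x=0$ in \eqref{start}; this gives $\eta_n(0)=\eta_{n+M}(0)+\sum_{k=n}^{n+M-1}1/L_k$, immediately yielding $\eta_n(0)\ge\sum_{k=n}^\infty 1/L_k$. The matching upper bound however requires $\eta_{n+M}(0)\to 0$ in the transient case, and controlling this tail is essentially the same electrical input in disguise, so quoting the standard identity is the more economical route.
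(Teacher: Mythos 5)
Your argument is correct, but it takes a different route from the paper. You reach \eqref{resistance} by quoting the random-walk/electrical-network dictionary, specifically the identity $G(v,v)=\pi(v)R_{\mathrm{eff}}(v,\infty)$ applied to the root of $G_n$, together with series-parallel reduction; the transience claim then follows from the Green's-function criterion. The paper explicitly declines this route (it cites \cite{LP:book} for the ``well known property'' but wants an explicit proof) and instead argues entirely within its generating-function machinery: the upper bound $\eta_0(0)\le\sum_{n\ge0}1/L_n$ is extracted from the walk-decomposition bound \eqref{upperbound} by choosing $R=x^{-1}\abs{\log x}$ and letting $x\to0$, the matching lower bound comes from iterating the $x=0$ recursion \eqref{start} to get $\eta_k(0)=\eta_0(0)-\sum_{n=0}^{k-1}1/L_n$ together with $\eta_k(0)\ge 1/L_k>0$, and transience follows since finiteness of $\eta_n(0)$ makes $Q_{G_n}(0)$ finite via \eqref{eta:def}. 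You correctly identify the exact gap in the purely recursive argument --- the recursion alone only gives $\eta_n(0)\ge\sum_{k\ge n}1/L_k$, and the reverse inequality needs control of the tail $\eta_{n+M}(0)$ --- but the paper fills that gap with \eqref{upperbound} rather than with the external electrical identity. The trade-off: your proof is shorter and leans on a standard textbook fact; the paper's is self-contained and, more importantly, produces the relation \eqref{eta:reln} and the technique around \eqref{upperbound} in a form that is reused later in Sections \ref{sec5} and \ref{sec6}, which is the stated reason for giving the explicit proof. One small point if you keep your route: you should note (as you implicitly do by working in $G_n$) that the finite dangling segment $0,\dots,n-1$ carries no current, so the effective resistance from $n$ to infinity is the same in $G$ and $G_n$, and that $\eta_n(0)$ means the monotone limit $x\downarrow0$ of $\eta_n(x)$, so it equals the Green's function over $L_n$ in both the transient and the recurrent (infinite) case.
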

\begin{proof} This is a well known property of graphs (see for example \cite{LP:book}) but we give an explicit proof here as we will need \eqref{resistance} later. Clearly the right hand side of \eqref{resistance} is the resistance by the usual laws for combining resistors in parallel and series.  
 Setting $R= x^{-1}\abs{\log x}$ in \eqref{upperbound}  gives
\beq Q(x)<\frac{1}{\abs{\log x}}+\sum_{n=0}^\infty\frac{1}{L_n}-\sum_{n=R}^\infty \frac{1}{L_n}\eeq
and
\beq Q(0)=\eta_0(0)\le \sum_{n=0}^\infty\frac{1}{L_n}\label{etazero:upper}\eeq
which shows that if the rhs of this expression is finite the graph is definitely transient. Assuming this is the case, noting from \eqref{start}  
 that 
\beq \eta_1(0)=\eta_0(0)-1/L_0,\eeq
and proceeding by induction we see that 
\bea \eta_k(0)&=&\eta_0(0)-\sum_{n=0}^{k-1}\frac{1}{L_n}.\label{eta:reln}
\eea
Setting $k=\infty$ in  \eqref{eta:reln} gives 
 \beq  \eta_{0}(0)\ge \sum_{n=0}^\infty\frac{1}{L_n}\eeq
 which together with \eqref{eta:reln} and \eqref{etazero:upper} gives \eqref{resistance}. Finally it follows from \eqref{eta:def} that if $\eta_n(0)$ is finite, so is $Q_n(0)$ and therefore $G$ is transient. 
 \end{proof}
 Note that by  Jensen's inequality
  \beq \sum_{n=0}^\infty\frac{1}{L_n}=\lim_{N\to\infty}\sum_{n=0}^N\frac{1}{L_n}> \lim_{N\to\infty}\frac{N^2}{\sum_{n=0}^N L_n}\eeq
  so that only  if $\dH\ge 2$   can  the graph be transient.

The solvable case $L_n=(n+2)(n+1)$ is discussed in Appendix \ref{Solvable} as a simple illustration of all these properties.

\subsection{Universal bounds on $\eta_n'(x)$}

Differentiating the recursion \eqref{start} and iterating we obtain
\bea\fl{ \abs{\eta_0'(x)}=\abs{\eta_{N}'(x)} \prod_{k=0}^{N-1}\frac{(1-x)}{(1+xL_k\eta_{k+1}(x))^2}+}\nn\\{+\sum_{n=0}^{N-1} (L_n\eta_{n+1}(x)^2+\eta_{n+1}(x) )(1-x)^{-1}\prod_{k=0}^{n}\frac{(1-x)}{(1+xL_k\eta_{k+1}(x))^2}}\label{second}\\
\fl{\qquad\quad =\abs{\eta_{N}'(x)} \prod_{k=0}^{N-1}\frac{(1-xL_k\eta_{k}(x))^2}{1-x}+}\nn\\{+\sum_{n=0}^{N-1} (L_n\eta_{n+1}(x)^2+\eta_{n+1}(x) )(1-x)^{-1}\prod_{k=0}^{n}\frac{(1-xL_k\eta_{k}(x))^2}{1-x}}\label{first}\eea
so
\bea \fl{ \abs{\eta_0'(x)}
<\abs{\eta_{N}'(x)} (1-x)^{-N}e^{-2x\sum_{k=0}^{N-1} L_k\eta_{k}(x)} +}\nn\\{ +\sum_{n=0}^{N-1}( L_n\eta_{n+1 }(x)^2+\eta_{n+1}(x)) (1-x)^{-n-2}e^{-2x\sum_{k=0}^{n-1} L_k\eta_{k}(x)}} \label{etaprime:upper} \eea
and
\bea \fl {\abs{\eta_0'(x)}>\abs{\eta_{N}'(x)} (1-x)^N e^{-2x\sum_{k=0}^{N-1} L_k\eta_{k+1}(x)}+}\nn\\{+\sum_{n=0}^{N-1} (L_n\eta_{n+1}(x)^2+\eta_{n+1}(x) )(1-x)^n e^{-2x\sum_{k=0}^{n} L_k\eta_{k+1}(x)}}. \label{etaprime:lower} \eea
We see that the upper and lower bounds are essentially of the same form.  Defining
\bea F_N(x)=\sum_{k=0}^{N} L_k\eta_{k+1}(x), \label{FN:defn}\\
G_N(x)=\sum_{k=0}^{N} L_k\eta_{k+1}(x)^2+\eta_{k+1}(x)\label{GN:defn}
 \eea
we have
\begin{lemma}\label{lemma:etaprime}
For any  $G\in \C R$
\bea \abs{\eta_0'(x)}&>& c\,  G_{N^-(x)-1}(x)\label{new:bound}\eea
for $x< x_0<1$, where $c$ is a constant and $N^{-}(x)$ is the integer such that
\bea x F_{N^-(x)}(x)& >& 1\ge x F_{N^-(x)-1}(x).\label{FNminus:defn}\eea
\end{lemma}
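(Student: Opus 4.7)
The plan is to substitute $N = N^-(x)$ into the lower bound \eqref{etaprime:lower} and drop the (non-negative) boundary term $|\eta_{N^-}'(x)|(1-x)^{N^-}e^{-2xF_{N^--1}(x)}$. This leaves
\[
|\eta_0'(x)| > \sum_{n=0}^{N^-(x)-1}\bigl(L_n\eta_{n+1}(x)^2 + \eta_{n+1}(x)\bigr)(1-x)^n e^{-2xF_n(x)}.
\]
For each $n$ in the range $0 \le n \le N^-(x) - 1$ the defining inequality \eqref{FNminus:defn} of $N^-(x)$ and the monotonicity of $F_n(x)$ in $n$ give $xF_n(x) \le xF_{N^-(x)-1}(x) \le 1$, hence the exponential factor satisfies $e^{-2xF_n(x)} \ge e^{-2}$. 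Similarly $(1-x)^n \ge (1-x)^{N^-(x)-1}$ in this range. Pulling these common factors out of the sum and recognising what remains as $G_{N^-(x)-1}(x)$ gives
\[
|\eta_0'(x)| > e^{-2}(1-x)^{N^-(x)-1}\,G_{N^-(x)-1}(x).
\]

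The remaining step is to show $(1-x)^{N^-(x)-1} \ge c_0 > 0$ for all $x < x_0 < 1$, which is equivalent to controlling $x N^-(x) = O(1)$. I would derive this by combining the lower bound $\eta_{k+1}(x) \ge \eta_{k+1}(1) = 1/L_{k+1}$ (from monotonicity in $x$ of the generating function) with the doubling hypothesis \eqref{Ball:condition} on $B_N(G)$. The former yields $L_k\eta_{k+1}(x) \ge L_k/L_{k+1}$; averaging this over $k$ and invoking doubling allows one to extract a linear lower bound of the form $F_n(x) \ge c'(n+1)$ for some constant $c' > 0$. Inserting this into $F_{N^-(x)-1}(x) \le 1/x$ gives $N^-(x) \le 1/(c'x) + 1$, so that $xN^-(x) \le 1/c' + x_0$ is uniformly bounded on $[0,x_0]$, and therefore $(1-x)^{N^-(x)-1}$ is bounded below by a positive constant for $x$ in this range.

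Assembling the two estimates produces $|\eta_0'(x)| > c\, G_{N^-(x)-1}(x)$ with $c = e^{-2}c_0$, as required. The main obstacle is the final step: the pointwise bound $L_k\eta_{k+1}(x) \ge L_k/L_{k+1}$ alone does not give a uniform positive lower bound on individual terms of $F_n(x)$, so controlling $xN^-(x)$ genuinely requires an averaging argument that exploits the regularity \eqref{Ball:condition} of $B_N(G)$. Without this structural input the lemma would fail for pathological graphs in which $L_k$ fluctuates wildly; with it, the linear growth of $F_n(x)$ is essentially forced.
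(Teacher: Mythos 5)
Your proposal is correct and its skeleton coincides with the paper's: both set $N=N^-(x)$ in \eqref{etaprime:lower}, use $xF_n(x)\le xF_{N^-(x)-1}(x)\le 1$ to bound the exponentials below by $e^{-2}$, and reduce the lemma to showing that $(1-x)^{N^-(x)}$ is bounded away from zero, i.e.\ to an upper bound on $N^-(x)$. Where you genuinely differ is in how that bound is obtained. The paper feeds the definition of $N^-(x)$ back into \eqref{EtaLower} to get the stronger pointwise estimate $\eta_n(x) > e^{-1}\sum_{k=n}^{N^-(x)-1} L_k^{-1}$, and then Jensen's inequality together with \eqref{Ball:condition} (as in \eqref{A:result1}) yields the quadratic growth $1/x > F_{N^-(x)-1}(x) > b^{-2}(N^-(x)-1)^2$, hence $N^-(x) < \lceil b\,x^{-1/2}\rceil$, which is \eqref{Nminus:upper}. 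You instead use only $\eta_{k+1}(x)\ge 1/L_{k+1}$, so $F_n(x)\ge\sum_{k=0}^{n}L_k/L_{k+1}$, and assert a linear bound $F_n(x)\ge c'(n+1)$ by ``averaging plus doubling''; this claim is true and your route can be completed, e.g.\ by AM--GM: the geometric mean of the ratios telescopes, $\frac{1}{n+1}\sum_{k=0}^{n}L_k/L_{k+1}\ge\left(L_0/L_{n+1}\right)^{1/(n+1)}$, and \eqref{Ball:condition} forces $L_{n+1}\le B_{n+1}(G)$ to grow at most polynomially, so the right-hand side tends to $1$; since $N^-(x)\to\infty$ as $x\to 0$, this gives $F_{N^-(x)-1}(x)\ge c'(N^-(x)-1)$ for $x<x_0$ and hence $N^-(x)=O(1/x)$, which indeed keeps $(1-x)^{N^-(x)-1}$ uniformly positive there, so your argument does prove the lemma (with constants depending on $G$, as in the paper). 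The trade-off is that your bound $N^-(x)=O(x^{-1})$ is weaker than the paper's $N^-(x)=O(x^{-1/2})$: the estimate \eqref{Nminus:upper} is not merely a by-product but is reused downstream (in the proof of Theorem \ref{ds leq dh} and in the lower bound of Section \ref{sec6}), so to support the rest of the paper you would still need the Jensen-based quadratic lower bound on $F_{N^-(x)-1}(x)$ rather than only the linear one.
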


\begin{proof} %
Setting $N=N^-(x)$ in   \eqref{EtaLower}, and using \eqref{FNminus:defn}
we have %
\bea \eta_n (x)	&>& e^{-1} \sum_{k=n}^{N^-(x)-1}\frac{1}{L_k}\label{eta:lower2}\eea
so that (using Jensen's inequality and the condition \eqref{Ball:condition}, see \eqref{A:result1})
\bea \frac{1}{x}>F_{N^-(x)-1}(x)	&>& e^{-1} \sum_{n=0}^{N^-(x)-1} L_n\sum_{k=n+1}^{N^-(x)-1}\frac{1}{L_k} > b^{-2} (N^-(x)-1)^2,\label{Nminus:inequalities}\eea%\nn\\
where $b$ is a constant $O(1)$, so that 
\beq \lceil b\, x^{-\half}\rceil >N^-(x).\label{Nminus:upper}\eeq
Lemma \ref{lemma:etaprime} then follows by setting $N=N^-(x)$ in \eqref{etaprime:lower} and using  \eqref{Nminus:upper}. For future use we note that because $\eta_k(x)<\eta_k(0)<\eta_0(0)$ we have
\beq \lfloor c\, x^{-1/\dH}\rfloor <N^-(x).\label{Nminus:lower}\eeq

\end{proof}

%\end{document}
\subsection{Relationship between Hausdorff and spectral dimensions}

Our first result is 

\begin{theorem}  \label{ds leq dh} For any  graph $G\in \C R$ such that the Hausdorff dimension $\dH$ exists and is less than 4 then, if the spectral dimension exists, $d_s\le \dH$.%$.
  \end{theorem}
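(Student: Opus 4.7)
Since $d_H<4$ and we expect $d_s\le d_H$, we anticipate $d_s<4$ and use the definition $|Q'(x)|\sim x^{-2+d_s/2}$ from \eqref{xdependence}. Because $Q=L_0\eta_0$ with $L_0$ a positive constant, the theorem reduces to establishing the polynomial lower bound
\[|\eta_0'(x)|\ge c\,x^{-2+d_H/2}\quad\text{as }x\to 0^+;\]
matching this against the upper side of \eqref{xdependence} then forces $-2+d_s/2\ge -2+d_H/2$, i.e.\ $d_s\le d_H$.

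The main tool is Lemma \ref{lemma:etaprime}, which supplies $|\eta_0'(x)|>c\,G_{N^-(x)-1}(x)$. Dropping the linear contribution in \eqref{GN:defn} and applying the Cauchy--Schwarz inequality to the pair $(L_k^{1/2}\eta_{k+1}(x),\,L_k^{1/2})$ gives
\[G_N(x)\ge\sum_{k=0}^N L_k\eta_{k+1}(x)^2\ge\frac{F_N(x)^2}{B_N(G)}.\]
Setting $N=N^-(x)-1$, the denominator is controlled by the Hausdorff assumption together with \eqref{Nminus:upper}: $B_{N^-(x)-1}(G)<c_1(N^-(x))^{d_H}<c_2\,x^{-d_H/2}$. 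For the numerator, the sandwich $F_{N^-(x)-1}(x)\le 1/x<F_{N^-(x)}(x)$ from \eqref{FNminus:defn}, combined with the single-step increment $\Delta(x):=L_{N^-(x)}\eta_{N^-(x)+1}(x)=F_{N^-(x)}(x)-F_{N^-(x)-1}(x)$, yields $F_{N^-(x)-1}(x)\ge 1/x-\Delta(x)$. In the \emph{regular} regime $\Delta(x)\le 1/(2x)$ this gives $F_{N^-(x)-1}(x)\ge 1/(2x)$ and the Cauchy--Schwarz chain closes to produce exactly $|\eta_0'(x)|\ge c\,x^{-2+d_H/2}$ as required.

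The principal obstacle is the opposite \emph{jump-dominated} regime $\Delta(x)>1/(2x)$, in which a single large edge multiplicity $L_{N^-(x)}$ carries most of $F_{N^-(x)}$ and Cauchy--Schwarz on $G_{N^-(x)-1}$ alone is too lossy: inserting only the Jensen-type bound $F_{N^-(x)-1}(x)>c(N^-(x))^2$ from \eqref{Nminus:inequalities} together with \eqref{Nminus:lower} yields the strictly weaker estimate $|\eta_0'(x)|>c\,x^{1-4/d_H}$, inadequate for $2<d_H<4$. To repair this I would extend the sum in \eqref{etaprime:lower} by one index to include the term at $n=N^-(x)$ and extract the large piece $L_{N^-(x)}\eta_{N^-(x)+1}(x)^2=\Delta(x)\cdot\eta_{N^-(x)+1}(x)$ directly; the Hausdorff assumption combined with \eqref{Ball:condition} forces $L_{N^-(x)}<c_3(N^-(x))^{d_H}<c_4\,x^{-d_H/2}$, so $\Delta(x)>1/(2x)$ compels $\eta_{N^-(x)+1}(x)>c_5\,x^{d_H/2-1}$, and the single extracted term is then itself of order $x^{-2+d_H/2}$. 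The delicate point, where I expect the bulk of the technical work to sit, is controlling the exponential factor $e^{-2xF_{N^-(x)}(x)}$ in \eqref{etaprime:lower} when $x\Delta(x)$ is not a priori bounded; I anticipate that the tail estimate $\eta_{N^-(x)+1}(x)\le\eta_{N^-(x)+1}(0)$ together with the Hausdorff control on $\sum_{k>N^-(x)}1/L_k$ forces $x\Delta(x)$ to stay bounded, closing the argument. Once both cases of the split are handled, $d_s\le d_H$ follows from comparing exponents in \eqref{xdependence}.
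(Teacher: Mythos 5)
Your overall strategy is the paper's: Lemma \ref{lemma:etaprime}, Cauchy--Schwarz in the form $G_N(x)\ge F_N(x)^2/B_N$, the Hausdorff bound on $B_N$, and $N^-(x)\lesssim x^{-1/2}$ from \eqref{Nminus:upper}. But you try to prove the pointwise bound $|\eta_0'(x)|\ge c\,x^{-2+\dH/2}$ for \emph{all} small $x$, which forces your case split, and the repair of the jump-dominated case does not close. The step ``the Hausdorff control on $\sum_{k>N^-}1/L_k$ forces $x\Delta(x)$ to stay bounded'' is not available from the hypotheses: the existence of $\dH$ controls $B_N=\sum_{k\le N}L_k$ but gives only a \emph{lower} bound on the resistance tail (via Jensen, as in \eqref{A:result1}); an upper bound on $\eta_N(0)$ is precisely the extra assumption (the resistance exponent $\rho$) that enters only in Lemma \ref{lemma:dS:lower} and Theorem \ref{Theorem:dS-dH}, and it is deliberately absent here. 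Indeed $\eta_N(0)$ may be infinite for graphs in $\C R$ with $\dH<4$ (sparse bottlenecks $L_n=1$ are compatible with $B_N\sim N^{\dH}$ and \eqref{Ball:condition}), so the tail estimate $\eta_{N^-+1}(x)\le\eta_{N^-+1}(0)$ is vacuous, and the generic bound $\eta_{N^-+1}(x)<x^{-1/2}$ only gives $x\Delta(x)\lesssim x^{(1-\dH)/2}$, which is unbounded. Without boundedness of $u=x\Delta(x)$ your extracted term carries the damping factor $e^{-2u}$ from \eqref{etaprime:lower} (since $xF_{N^-}(x)\le 1+u$), and $e^{-2u}u^2$ can be arbitrarily small, so the estimate degrades exactly in the regime you need it.

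The paper sidesteps this entirely with a subsequence argument: it defines $\C X=\{x_k:\ x_kF_{k-1}(x_k)=1\}$, so that at these special values the Cauchy--Schwarz numerator is \emph{exactly} $1/x_k^2$, giving $|\eta_0'(x_k)|>c/(x_k^2B_k)>c\,x_k^{-2+\dH/2}\Psi(x_k)$ with no case split and no control of any single increment. Since the $x_k$ accumulate at $0$, and the existence of $d_s$ would impose $|\eta_0'(x)|<c'x^{-2+d_s/2}$ for all small $x$, the assumption $d_s>\dH$ is contradicted along this sequence; nothing stronger than a bound along a sequence tending to zero is needed. If you want to keep your pointwise formulation you would have to either add a hypothesis of the type $\eta_N(0)\sim N^{2-\dH+\rho}$ (and then you are in the territory of Theorem \ref{Theorem:dS-dH}), or find an argument that rules out $x\Delta(x)\to\infty$ using only $\dH$ and \eqref{Ball:condition}, which your proposal does not supply; the cleaner fix is to adopt the evaluation at the points of $\C X$, where your ``regular regime'' computation is automatically valid and is exactly the paper's proof.
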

%  \end{document}
\begin{proof}
The proof is by contradiction. 
%. 
First define the set of numbers
\beq \C X=\{x_k: 1=x_kF_{k-1}(x_k), k=1,2,3\ldots\}.\eeq
%
%$.
Applying Cauchy Schwarz inequality to \eqref{GN:defn} we have
\beq G_N(x)> \frac{F_N(x)^2}{B_N}\eeq
so  applying  Lemma \ref{lemma:etaprime} gives
\bea\abs{\eta_0'(x_k)}	&>&\frac {c}{x_k^2 B_k}.\label{etaprime:Xk}\eea%\\
%			&a
%
% $N$.   
% 
We assume that $\dH$ exists for $G$ so that
\bea\label{eqthm8} \abs{\eta_0'(x_k)}&>&\frac {c}{x_k^2 k^{\dH}}\Psi(k),\eea
where $ \Psi(k)$ denotes a generic logarithmically varying function of $k$. %  
Using  \eqref{Nminus:upper} and \eqref{Nminus:lower} (the latter if necessary to bound the logarithmic part)
 the right hand side is bounded below  by 
$c x_k^{-2+\dH/2}\Psi(x_k)$. Now assume that $d_s$ exists in which case $\abs{\eta_0'(x_k)}< c' x^{-2+d_s/2}_k$; %(respectively
 however if $d_s>\dH$ there exist an infinite number of values $x\in\C X$ arbitrarily close to zero which contradict this. Therefore $d_s\le \dH$.
 
Note that for $\dH=4$ we get $\abs{\eta_0'(x_k)}>\Psi(x_k)$. If $\Psi(x)$ is logarithmically diverging as $x\to 0$ then again we can conclude that $d_s\le \dH$; the case when $ \eta_0'(0)$ is finite and $\dH=4$ is more subtle and we will not pursue it here.

\end{proof}

We can obtain more specific information about the spectral dimension by being more specific about the properties of the ensemble $\R M=\{\C R,\mu\}$. Define
\bea B_N^{(1)}=\sum_{k=0}^NL_k\sum_{n=k+1}^\infty \frac{1}{L_n},\\%	\ 
\overline B_N^{(2)}=\sum_{k=0}^NL_k\sum_{m=k}^N \frac{1}{L_m}\sum_{n=k+1}^N \frac{1}{L_n}.	%d}
\eea
Then we have the following
\begin{lemma}\label{lemma:dS:upper}
Given a graph $G\in\C R$ such that $\dH$ exists and
\bea B_N^{(1)}\sim N^{2+\gamma},\qquad%\\%	&\\\ 
\overline B_N^{(2)}\sim N^{4-\dH+\delta},	\label{lemma:dS:upper:conditions}
\eea
then
the spectral dimension if it exists must satisfy
\bea d_s
&\le&\dH-\frac{2\delta-(4-\dH)\gamma}{2+\gamma}.\label{FullSpectralDim}
\eea
\end{lemma}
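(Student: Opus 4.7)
The plan is to refine the strategy of Theorem \ref{ds leq dh} by replacing the Cauchy--Schwarz step used there with a direct pointwise estimate that exploits the finer scaling data in \eqref{lemma:dS:upper:conditions}.

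First I would invoke Lemma \ref{lemma:etaprime} to obtain $|\eta_0'(x)| > c\, G_{N^-(x)-1}(x)$. Discarding the linear piece of $G_N$ and applying the pointwise lower bound $\eta_{k+1}(x) > e^{-1}\sum_{n=k+1}^{N^-(x)-1} 1/L_n$ from \eqref{eta:lower2} gives
\begin{equation}
G_{N^-(x)-1}(x) > c \sum_{k=0}^{N^-(x)-1} L_k \left(\sum_{n=k+1}^{N^-(x)-1} \frac{1}{L_n}\right)^2.
\end{equation}
Splitting $\sum_{m=k}^{N}1/L_m = 1/L_k + \sum_{m=k+1}^{N}1/L_m$ inside the definition of $\overline B_N^{(2)}$ rewrites it as the double sum above plus a tail of the form $\sum_{n\le N} n/L_n$; in the regime where the quadratic piece dominates this tail, the second hypothesis of \eqref{lemma:dS:upper:conditions} yields $|\eta_0'(x)| > c\, (N^-(x))^{4-\dH+\delta}$, up to logarithms.

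Next I would translate $N^-(x)$ into $x$ using the first hypothesis. Since the graph is transient, $\eta_{k+1}(x) \le \eta_{k+1}(0) = \sum_{n \ge k+1} 1/L_n$, so $F_N(x) \le B_N^{(1)} \sim N^{2+\gamma}$. Combining with the defining inequality $F_{N^-(x)}(x) > 1/x$ from \eqref{FNminus:defn} forces $N^-(x) > c\, x^{-1/(2+\gamma)}$, and substituting into the previous estimate gives
\begin{equation}
|\eta_0'(x)| > c\, x^{-(4-\dH+\delta)/(2+\gamma)}.
\end{equation}
If $d_s$ exists, then $|\eta_0'(x)| < c'\, x^{-2+d_s/2}$ along a sequence $x \to 0$ by \eqref{xdependence}. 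Running the contradiction argument of Theorem \ref{ds leq dh} along the set $\C X$ and matching exponents yields $d_s/2 \le 2 - (4-\dH+\delta)/(2+\gamma)$, which rearranges algebraically to the stated bound.

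The main obstacle is justifying the asymptotic comparison $\sum_k L_k (\sum_{n>k}^N 1/L_n)^2 \sim \overline B_N^{(2)}$: the two quantities differ by the tail $\sum_{n\le N} n/L_n$, which is subleading precisely when $4-\dH+\delta > 3-\dH$, a mild condition in the physical regime $2 \le \dH \le 4$ but one that must be noted. The remaining bookkeeping, namely tracking the logarithmic factors permitted by the $\sim$ convention and the discreteness of $N^-(x)$, is routine and parallels what was already done for Theorem \ref{ds leq dh}.
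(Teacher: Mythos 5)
Your overall route is the paper's: Lemma \ref{lemma:etaprime} to get $\abs{\eta_0'(x)}>c\,G_{N^-(x)-1}(x)$, the observation $\eta_n(x)\le\eta_n(0)$ to deduce $1/x<F_{N^-(x)}(x)\le B^{(1)}_{N^-(x)}\sim (N^-(x))^{2+\gamma}$ and hence $N^-(x)>c\,x^{-1/(2+\gamma)}$ up to logarithms, and then exponent matching against $\abs{\eta_0'(x)}<c'x^{-2+d_s/2}$; the final algebra rearranges correctly to \eqref{FullSpectralDim}.

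The genuine gap is the step you flag yourself, and it is self-inflicted. Having discarded the linear term $\eta_{k+1}(x)$ of $G_N(x)$, you retain only the quadratic sum $\sum_k L_k\bigl(\sum_{n=k+1}^{N^-(x)-1}1/L_n\bigr)^2$, and to import the hypothesis $\overline B^{(2)}_N\sim N^{4-\dH+\delta}$ you must show that the linear tail $\sum_{n\le N}n/L_n$ is subleading relative to $\overline B^{(2)}_N$. Your criterion ``$4-\dH+\delta>3-\dH$'' presumes that this tail scales like $N^{3-\dH}$, but nothing in the hypotheses \eqref{lemma:dS:upper:conditions} controls $\sum_{n\le N}n/L_n$ directly; a priori you only know it is bounded by $\overline B^{(2)}_N$ itself (it is one of its two nonnegative summands), so for an irregular sequence $\{L_n\}$ your argument as written does not exclude the tail carrying the leading behaviour, in which case your lower bound would produce a smaller exponent than $4-\dH+\delta$. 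The repair is immediate and is exactly what the paper does: keep both terms of $G_N(x)$. Since $\eta_{k+1}(x)>e^{-1}\sum_{n=k+1}^{N^-(x)-1}1/L_n$ by \eqref{eta:lower2}, and since the splitting $\sum_{m=k}^{N}1/L_m=1/L_k+\sum_{m=k+1}^{N}1/L_m$ that you already performed shows $\overline B^{(2)}_N$ is precisely the quadratic sum plus that linear tail, one gets $G_{N^-(x)-1}(x)\ge e^{-2}\,\overline B^{(2)}_{N^-(x)-1}$ with no side condition and no comparison between the two pieces. With that single change your proof coincides with the paper's (which also notes that the borderline case $\dH=d_s=4$, where $\eta_0'(0)$ may be finite, is left aside); the detour through the set $\C X$ is harmless but unnecessary here, since the bound holds for all sufficiently small $x$ via $N^-(x)$.
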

\begin{proof}
The proof uses Lemma \ref{lemma:etaprime} to show that
\beq  \abs{\eta_0'(x)} > \underline{c}\, x^{-2+\alpha/2}\abs{\log x}^{\underline c'}\eeq
where %d  
$\alpha$ is given by the right hand side of \eqref{FullSpectralDim}.   Firstly by combining \eqref{eta:lower2} and the definition of $G_N(x)$ we have
\beq G_{N^-(x)-1}(x)\ge e^{-2} \overline B_{N^-(x)-1}^{(2)},\eeq
while from the definition of $N^-(x)$ and the fact that $\eta_n(x)$ is a decreasing function of $x$ we get   \eqref{FNminus:defn} 
\bea \frac{1}{x}<F_{N^-(x)}(x)	&<& B_{N^-(x)}^{(1)}.\eea% ea
Lemma \ref{lemma:dS:upper} follows by combining these two results with the conditions \eqref{lemma:dS:upper:conditions}
and Lemma \ref{lemma:etaprime}. 
Again, the special case $\dH=d_s=4$ is  more subtle because $\eta_0'(0)$ might be finite and we will not pursue it here.
%.

\end{proof}

 Now define
\bea 
 B_N^{(2)}=\sum_{k=0}^NL_k\sum_{m=k}^\infty \frac{1}{L_m}\sum_{n=k+1}^\infty \frac{1}{L_n}\,.	\eea
Then
\begin{lemma}\label{lemma:dS:lower}
Given a graph $G\in\C R$ such that $\dH$ exists and
\bea \eta_N(0)\sim N^{2-\dH+\rho},\qquad%
 B_N^{(2)}\sim N^{4-\dH+\delta'},	\label{lemma:dS:lower:conditions}
\eea
or an ensemble $\R M=\{\C R,\mu\}$ such that 
\bea \expect{\eta_N(0)}{\mu}\sim N^{2-\dH+\rho},\qquad%\lan}\\
 \expect{B_N^{(2)}}{\mu}\sim N^{4-\dH+\delta'},%	
\eea
then the spectral dimension is bounded by
\bea d_s
&\geq&\dH- \frac{(4-\dH)\rho-(2-\dH)\delta'}{2+\delta'-\rho}\label{FullSpectralDim2}
\eea
provided that $\rho\leq \delta'+1$.
\end{lemma}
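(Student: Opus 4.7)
I would mirror the proof of Lemma~\ref{lemma:dS:upper}, but work from the upper bound~\eqref{etaprime:upper} on $\abs{\eta_0'(x)}$ instead of the lower bound~\eqref{etaprime:lower}, to derive an estimate $\abs{\eta_0'(x)}<\cupper\,x^{-2+\beta/2}\abs{\log x}^{\cupper'}$ with $\beta$ equal to the right hand side of~\eqref{FullSpectralDim2}; by the definition~\eqref{xdependence} this translates directly into $d_s\geq\beta$.

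The first step is to control the factors appearing in~\eqref{etaprime:upper}. Iterating~\eqref{start} shows that $\eta_k(x)$ is a decreasing function of $x$, so $\eta_{n+1}(x)\leq\eta_{n+1}(0)$; I would use this inside the parenthesis $L_n\eta_{n+1}^2+\eta_{n+1}$ to replace walk-dependent quantities by the deterministic resistance-to-infinity. For the exponential factor $e^{-2x\sum_{k<n}L_k\eta_k(x)}$ to act as a genuine cutoff rather than a trivial upper bound of $1$, I would use~\eqref{eta:lower2} to obtain $\eta_k(x)\geq e^{-1}(\eta_k(0)-\eta_{N^-(x)}(0))$ for $k<N^-(x)$, so that $\sum_{k<n}L_k\eta_k(x)$ is of order $B_n^{(1)}$ throughout the range of $n$ in which the exponential remains close to unity.

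The resulting bound on $\abs{\eta_0'(x)}$ then separates into a quadratic and a linear part. Summing the quadratic contribution using $L_n\eta_{n+1}(0)^2\leq L_n\eta_n(0)\eta_{n+1}(0)$ produces partial sums dominated by $B_N^{(2)}\sim N^{4-\dH+\delta'}$, while summing the linear contribution using $\eta_N(0)\sim N^{2-\dH+\rho}$ gives a partial sum of order $N^{3-\dH+\rho}$. The assumption $\rho\leq\delta'+1$ is precisely $3-\dH+\rho\leq 4-\dH+\delta'$, ensuring that the quadratic part dominates at large $N$. The scaling of $B_n^{(1)}$ that sets the cutoff can be controlled via Cauchy--Schwarz together with $B_N\sim N^{\dH}$. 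An Abel summation (or continuum integration by parts) then packages the ingredients into the required bound $\abs{\eta_0'(x)}<\cupper\,x^{-2+\beta/2}\abs{\log x}^{\cupper'}$ with $\beta$ as in~\eqref{FullSpectralDim2}. The boundary term $\abs{\eta_N'(x)}(1-x)^{-N}e^{-2xF_N^*(x)}$ in~\eqref{etaprime:upper} is killed by sending $N\to\infty$, which is legitimate once the exponential factor beats the $(1-x)^{-N}$ growth. For the ensemble-averaged statement the argument is essentially linear in $B_N^{(2)}$ and $\eta_N(0)$ after the monotonicity reductions have been made, so the ensemble hypotheses carry through via Jensen's inequality.

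The hardest step will be the optimisation that yields the precise exponent in~\eqref{FullSpectralDim2}: one has to track carefully how the exponential cutoff at a scale $n_c(x)$ determined by $xF_{n_c}^*(0)\sim 1$ combines with the polynomial growth of $B_n^{(2)}$ and with the scaling of $\eta_n(0)$, so that $\rho$ and $\delta'$ assemble into the exact combination shown. The condition $\rho\leq\delta'+1$ sits at the interface between a regime where the quadratic term dominates the sum and one where the linear term takes over, and it is used structurally rather than cosmetically. As in Lemma~\ref{lemma:dS:upper}, the borderline case where $\eta_0'(0)$ might be finite, which becomes relevant as $\dH$ approaches $4$, requires separate treatment that I would not pursue here.
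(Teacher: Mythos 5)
There is a genuine gap: your mechanism for producing the cutoff does not work at the scale needed for \eqref{FullSpectralDim2}. In an \emph{upper} bound the exponential factor in \eqref{etaprime:upper} only helps if you can bound $\sum_{k}L_k\eta_k(x)$ from \emph{below}, and the available lower bounds fail beyond $N^-(x)$: the bound \eqref{eta:lower2} is empty there, and the generic bound $L_k\eta_k(x)\ge 1$ coming from \eqref{eta:lower} only gives a net factor $(1-x)^{-n}e^{-2x\sum_{k<n}L_k\eta_k(x)}\le c\, e^{-xn}$, i.e.\ an effective cutoff at $n$ of order $1/x$, not at the scale $n_c$ with $xF_{n_c}(0)\sim 1$ that you assert. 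Knowing that $xF_n(0)$ is large says nothing about $xF_n(x)$, since $\eta_n(x)\le\eta_n(0)$; beyond $N^-(x)$ all that \eqref{FNminus:defn} yields is a constant suppression $e^{-2}$, not genuine decay. Consequently the terms with $n$ between $N^-(x)$ and $1/x$ survive, and since the summands of $B^{(2)}$ grow like $n^{3-\dH+\delta'}$ (exponent larger than $-1$ for $\dH<4$) they dominate: your route gives only $\abs{\eta_0'(x)}<\cupper\, x^{-(4-\dH+\delta')}\abs{\log x}^{\cupper'}$, which reproduces \eqref{FullSpectralDim2} only in the borderline case $\rho=\delta'+1$. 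The plan to ``kill the boundary term by sending $N\to\infty$'' forfeits exactly the freedom that makes the correct argument work.

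The paper's proof keeps $N$ finite and discards the exponentials altogether (bounding them by $1$). The two steps you are missing are: (i) the boundary term is controlled by convexity, $\abs{\eta_N'(x)}<\eta_N(0)/x$, because $\eta_N$ is a finite convex decreasing function on $[0,1)$; and (ii) $G_N(x)\le G_N(0)=B_N^{(2)}$, using $L_k\eta_k(0)\eta_{k+1}(0)=L_k\eta_{k+1}(0)^2+\eta_{k+1}(0)$, so there is no separate ``linear part'' that needs to be dominated. This gives $\abs{\eta_0'(x)}<(1-x)^{-N}\bigl(\eta_N(0)/x+B_N^{(2)}\bigr)$ for every $N$, and the choice $N=x^{-1/(2+\delta'-\rho)}$ balances $N^{2-\dH+\rho}/x$ against $N^{4-\dH+\delta'}$, yielding precisely the exponent in \eqref{FullSpectralDim2}. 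The hypothesis $\rho\le\delta'+1$ enters here and not where you place it: it guarantees that $xN$ stays bounded so that the prefactor $(1-x)^{-N}$ is harmless, rather than ensuring a quadratic term beats a linear one. Finally, for the ensemble statement no Jensen inequality is needed: the bound is linear in $\eta_N(0)$ and $B_N^{(2)}$, so one simply takes expectations before optimising over $N$.
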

\begin{proof} Note that since $\eta_N(x)$ is a finite convex decreasing  function in $x=[0,1)$ 
\beq \abs{\eta_{N}'(x) } < \frac{\eta_{N}(0)}{x},\eeq
and $G_N(x)<G_N(0)=B_N^{(2)}$,
 combining this with \eqref{etaprime:upper} gives
\bea \abs{\eta_0'(x)}
&<& (1-x)^{-N}\left(\frac{\eta_{N}(0)}{x}+B_N^{(2)} \right). \label{etaprime:upper:1} \eea
Choosing $N=x^{-\frac{1}{2+\delta'-\rho}} $ gives
\beq \abs{\eta_0'(x)}<   {\overline c}\, x^{-\frac{4-\dH+\delta'}{2+\delta'-\rho}}\abs{\log x}^{\overline c'}\eeq
for $x<x_0$ and provided that $\rho\leq \delta'+1$. The result for $d_s$ follows. In the case of the ensemble average we simply take the expectation value in
\eqref{etaprime:upper:1} before proceeding as before. %. 
\end{proof}

There are a number of constraints on and relations between the quantities
$\rho$, $\dH$, $\gamma$, $\delta$ and $\delta'$ which are summarized by

\begin{lemma}\label{constraints}
For any graph $G\in\C R$ such that $\dH$ exists
\bea \rho\ge0,\quad\delta'\ge0,\quad \delta'\ge 2\gamma, \eea
and for any graph $G\in\C R$ such that $\dH$ and $\rho$ exist
\bea \gamma=\rho,\quad \delta'=2\rho,\quad \delta=2\rho.\eea
\end{lemma}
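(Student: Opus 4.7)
The plan is to establish the three inequalities by applying Jensen and Cauchy--Schwarz inequalities to the sum structures of $\eta_N(0)$, $B_N^{(1)}$ and $B_N^{(2)}$, and to prove the three equalities via matched dyadic upper and lower bounds that exploit both scalings $B_N\sim N^{\dH}$ and $\eta_N(0)\sim N^{2-\dH+\rho}$.

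First I would establish $\rho\ge 0$ by applying Jensen on a single dyadic block:
\[
\eta_N(0)\ge \sum_{k=N}^{2N}\frac{1}{L_k}\ge \frac{(N+1)^2}{B_{2N}-B_{N-1}}\ge c\,N^{2-\dH},
\]
where the last step uses \eqref{Ball:condition} together with $B_{2N}\sim N^{\dH}$. The same lower bound on $\eta_k(0)$ combined with monotonicity in $k$ then gives $B_N^{(2)}\ge B_{N/2}\,\eta_{N/2}(0)^2\ge c\,N^{4-\dH}$, i.e.\ $\delta'\ge 0$. For $\delta'\ge 2\gamma$, Cauchy--Schwarz yields
\[
\bigl(B_N^{(1)}\bigr)^2=\Bigl(\sum_k L_k\,\eta_{k+1}(0)\Bigr)^2\le B_N\sum_k L_k\,\eta_{k+1}(0)^2\le B_N\,B_N^{(2)},
\]
using $\eta_{k+1}(0)\le \eta_k(0)$; comparing exponents gives $\delta'\ge 2\gamma$.

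For the equalities assume $\dH$ and $\rho$ both exist and write $\alpha=2-\dH+\rho\le 0$. For the upper bound on $B_N^{(1)}$, decompose the sum into dyadic blocks $[2^{i-1},2^i]$ with $2^i\le N$; each block contributes at most $\eta_{2^{i-1}}(0)\bigl(B_{2^i}-B_{2^{i-1}-1}\bigr)\le c\,2^{i(\dH+\alpha)}$, and since $\dH+\alpha=2+\rho>0$ the geometric sum is dominated by the largest block, yielding $c\,N^{2+\rho}$. For the matching lower bound, fix a constant $k_0$ large enough that $B_N-B_{N/k_0}\ge c\,N^{\dH}$ (possible by choosing $k_0$ so that the power-law lower bound on $B_N$ dominates the power-law upper bound on $B_{N/k_0}$), and bound $B_N^{(1)}\ge \eta_{N+1}(0)\bigl(B_N-B_{N/k_0}\bigr)\ge c\,N^{2+\rho}$. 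Hence $\gamma=\rho$. The identical dyadic strategy applied to $B_N^{(2)}=\sum_k L_k\,\eta_k(0)\eta_{k+1}(0)$ produces $B_N^{(2)}\sim N^{\dH+2\alpha}=N^{4-\dH+2\rho}$, so $\delta'=2\rho$. For $\delta=2\rho$, use $\overline B_N^{(2)}\le B_N^{(2)}$ for the upper bound, and for the lower bound restrict the outer sum to $k\in[0,N/k_0]$, choosing $k_0$ large enough so that $\eta_{N+1}(0)\le \tfrac{1}{2}\eta_k(0)$ throughout this range, reducing the estimate to a lower bound on $\sum_{k\le N/k_0} L_k\,\eta_k(0)^2$ which by the previous dyadic analysis is again $\sim N^{4-\dH+2\rho}$.

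The main obstacle is the bookkeeping around the paper's $\sim$ convention: slowly varying logarithmic factors and different constants on upper and lower bounds mean that the dyadic annuli must be chosen thick enough (with a fixed constant $k_0$) for the lower-bound estimates to survive the subtraction of a matching power. Marginal situations such as $\rho=\dH-2$ (where $\dH+\alpha=0$, so the geometric dyadic sum produces only a logarithm rather than a power) or the boundary values $\dH=2,4$ would need separate treatment, but these lie outside the physically interesting regime emphasised in the paper.
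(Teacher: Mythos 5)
Your proposal is correct and follows essentially the same route as the paper's Appendix B: Jensen's inequality on a dyadic block for $\rho\ge 0$ and $\delta'\ge 0$, Cauchy--Schwarz together with the monotonicity of $\eta_k(0)$ for $\delta'\ge 2\gamma$, dyadic (halving) decompositions for the upper bounds giving $\gamma=\rho$ and $\delta'=2\rho$, and $\overline B_N^{(2)}\le B_N^{(2)}$ plus a truncated sum surviving the subtraction of $\eta_N(0)$ for $\delta=2\rho$ (the paper restricts the inner sums to $(\lfloor N/2\rfloor,N]$ where you restrict the outer sum to $k\le N/k_0$, a cosmetic difference, and your $k_0$ device is if anything slightly more careful about constants than the paper's own sketch). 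One small slip in your closing caveat: since $\alpha=2-\dH+\rho$ one has $\dH+\alpha=2+\rho\ge 2>0$ always, so the ``marginal'' case $\rho=\dH-2$ you worry about cannot make the dyadic geometric sum degenerate; the only genuinely delicate boundary is $\dH=4$ with $\rho=0$, which the paper also sets aside.
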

The proofs are elementary manipulations and outlined in Appendix \ref{Simple}.

The main result of this section is
\begin{theorem}\label{Theorem:dS-dH} For  any graph $G\in \C R$  such that $\dH<4$ and $\rho$ exist the spectral dimension is given by
\beq d_s=\frac{2\dH}{2+\rho}.\eeq
\end{theorem}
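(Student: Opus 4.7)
The plan is to sandwich $d_s$ between the upper bound of Lemma \ref{lemma:dS:upper} and the lower bound of Lemma \ref{lemma:dS:lower}, and then invoke Lemma \ref{constraints} to show that both bounds collapse to the stated value. Essentially the whole work has already been done in the preceding lemmas; what remains is algebraic bookkeeping plus verification that the hypotheses of both bounds are met.

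First I would observe that since $G$ is assumed to have a well-defined Hausdorff dimension $\dH$ and a well-defined resistance exponent $\rho$, Lemma \ref{constraints} applies and provides the identities $\gamma=\rho$, $\delta=2\rho$, and $\delta'=2\rho$. I would also check the side conditions needed to invoke the bounds: Lemma \ref{lemma:dS:upper} requires $\dH<4$, which is the hypothesis of the theorem, and Lemma \ref{lemma:dS:lower} requires $\rho\le\delta'+1$, i.e.\ $\rho\le 2\rho+1$, which follows automatically from $\rho\ge 0$ (also guaranteed by Lemma \ref{constraints}).

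Plugging $\delta=2\rho$ and $\gamma=\rho$ into \eqref{FullSpectralDim} gives
\begin{equation}
d_s \;\le\; \dH - \frac{4\rho-(4-\dH)\rho}{2+\rho} \;=\; \dH - \frac{\dH\rho}{2+\rho} \;=\; \frac{2\dH}{2+\rho},
\end{equation}
while plugging $\delta'=2\rho$ into \eqref{FullSpectralDim2} gives
\begin{equation}
d_s \;\ge\; \dH - \frac{(4-\dH)\rho - (2-\dH)(2\rho)}{2+2\rho-\rho} \;=\; \dH - \frac{\dH\rho}{2+\rho} \;=\; \frac{2\dH}{2+\rho}.
\end{equation}
The two bounds agree and produce the advertised formula.

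The argument is essentially a two-line computation once the earlier lemmas are in hand, so there is no real obstacle here; the substantive work was done in establishing \eqref{FullSpectralDim}, \eqref{FullSpectralDim2} and the constraints $\gamma=\rho$, $\delta=\delta'/2\cdot 2=2\rho$ of Lemma \ref{constraints}. The one place to be careful is the excluded case $\dH=4$: the upper bound in Lemma \ref{lemma:dS:upper} was explicitly flagged as subtle there (since $\eta_0'(0)$ may be finite), and the conclusion $d_s=2\dH/(2+\rho)$ would have to be argued separately, which is why the theorem restricts to $\dH<4$.
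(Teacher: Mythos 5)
Your proposal matches the paper's proof exactly: combine the upper bound of Lemma \ref{lemma:dS:upper} with the lower bound of Lemma \ref{lemma:dS:lower}, substitute the exponent relations $\gamma=\rho$, $\delta=\delta'=2\rho$ from Lemma \ref{constraints}, and check the side condition $\rho\le\delta'+1$ via $\rho\ge0$. The algebra you carry out (both bounds collapsing to $2\dH/(2+\rho)$) is correct, and your remark on the excluded case $\dH=4$ mirrors the paper's own caveat.
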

\begin{proof} The theorem follows from the upper and lower bounds in Lemmas \ref{lemma:dS:upper}
 and \ref{lemma:dS:lower}
 and using the relations between $\delta$, $\delta'$, $\gamma$ and $\rho$ in Lemma \ref{constraints}.

\end{proof}

It is an immediate corollary of Theorem \ref{Theorem:dS-dH}  that $\rho=0$ is a necessary and sufficient  condition for 
$\dS=\dH$.

%%%%%%%%%%%%%%%%%%%%%%%%%%%%%%%%%%%%%%%%%%%%%%%%%%%%%%%%%
\section{Scale dependent spectral dimension in the transient case} \label{sec6}

In this section we extend our results from Section \ref{sec4} regarding the scale dependent spectral dimension in the recurrent case to the transient case. In particular, we are interested in a situation, motivated by the numerical simulations of CDT \cite{Ambjorn:2005db}, where there is a scale dependent spectral dimension from $d_s^0=2$ at short distances while at long distances $d_s^\infty=4$. 

While two-dimensional causal triangulations are composed of  triangles, their four-dimensional analogues are composed of  four-simplices. As we explained in more detail in a recent letter \cite{letter} one can  define an injection from four-dimensional CDT to a multigraph by keeping only the time-like edges between subsequent slices and shrinking the sections of spatial three-simplices to a single vertex. The full partition function then induces a measure on the multigraph.

In the absence of an analytical expression for the partition function, numerical simulations such as \cite{Ambjorn:2005db,CDTnumerics,CDTnumerics2} give us insights into the properties of the measure. For example, there is compelling evidence that
\beq
\avg{B_N}_\mu\sim N^4
\eeq
to leading order and that 
\beq
\avg{L_N}_\mu\sim N^3
\eeq
to leading order. However, since the computer simulations are implemented for the space-like three-simplices and not for the time-like edges there are likely to be sub-leading terms, as well as a factor of inverse Newton's constant multiplying the leading term. Using the insights of Section \ref{sec4} we  make the following ansatz for the behaviour of $\avg{L_N}_\mu$ in analogy to the two-dimensional case\footnote{We use ``$\simeq$" to denote equality up to a multiplicative constant.}
\beq
\avg{L_N}_\mu\simeq\nu N^3 +N.\label{avg:LN}
\eeq
Recall that in the two-dimensional case we had $\avg{L_N}_\mu=f''(1) N + 1$ so here $\nu$ takes the role of $f''(1)$.  We consider no quadratic term here, since it would come with a rather unnatural scaling of Newton's constant. If $\nu$ is small in \eqref{avg:LN} then loosely speaking at very large distances the Hausdorff dimension is 4 while at short distances the linear term dominates and the volume growth appears to be two-dimensional.  As we will see in the following, this ansatz together with two technical bounds on the resistance will give us a model of a multigraph ensemble with transient walks which has a scale dependent spectral dimension varying from $d_s^\infty=4$ at long distances to $d_s^0=2$ at short distances. In particular, we make the following assumptions on the multigraph ensemble:

\begin{assume}\label{a1} 
The multigraph ensemble $\R{M}=\{\C{R},\mu\}$ satisfies
\newcounter{saveenum}
\begin{enumerate}
\item \beq \label{assumption_i}\avg{L_N}_\mu\simeq\nu N^{3-\epsilon} +N,\eeq
\setcounter{saveenum}{\value{enumi}}
\end{enumerate}
with $\epsilon>0$ being arbitrarily small and for $\mu$-almost all multigraphs there exists a $N_0>0$ such that for $N>N_0$
\begin{enumerate}\setcounter{enumi}{\value{saveenum}}
\item \beq \label{assumption_ii}\eta_N (0) \leq \frac{N}{\avg{L_N}_\mu}\Psi_+(\sqrt{\nu} N^{1-\epsilon/2}) %}, 
\eeq
\item \beq \label{assumption_iii}L_N \leq \avg{L_N}_{\mu} \Psi (\sqrt{\nu} N^{1-\epsilon/2}) %\}, 
\eeq
\end{enumerate}
where $\Psi (x)$ and $\Psi_+(x)$ are functions which diverge and vary slowly 
% l
at $x=0$ and $x=\infty$. %
\end{assume}

The introduction of the arbitrarily small constant  $\epsilon>0$ is for  technical reasons and for all practical purposes one can think of it as being zero. The bounds \eqref{assumption_ii} and \eqref{assumption_iii} are completely analogous to the concrete two-dimensional model studied in Section \ref{sec4}, where one has for example that, for almost all graphs,  $ L_N \leq \avg{L_N}_{\mu} \log (f''(1) N)$, for $N > N_0$. Assumption \eqref{assumption_i} implies $cN\avg{L_N}_\mu<\avg{B_N}_\mu<c' N\avg{L_N}_\mu$, where $c<c'$ are positive constants, and the multigraph ensemble $\R{M}=\{\mathcal{R},\mu\}$ is almost surely transient. 
 
The main result of this section is 

\begin{theorem}\label{VariableSpectralDimensionTransient}
A multigraph ensemble $\R{M}=\{\mathcal{R},\mu\}$ which satisfies Assumption \ref{a1} has $d_s^0=2$ at short distances while at long distances $d_s^\infty=4-\epsilon$ for $\epsilon>0$ arbitrarily small.
\end{theorem}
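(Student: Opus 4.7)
The plan is to mirror the proof of Theorem \ref{VariableSpectralDimensionRecurrent}, but in the transient setting where the relevant object is $\partial_\xi\tilde Q(\xi,\lambda)$ and the spectral dimensions are extracted via \eqref{trdsdef}. The crossover in $\avg{L_N}_\mu$ in \eqref{assumption_i} occurs at the discrete scale $N_c\simeq\nu^{-1/(2-\epsilon)}$. Matching this to the continuum scale $\lambda$ via the walk relation $t_c\simeq N_c^{2}$ fixes the scaling of the coupling as $\nu=(a/\lambda)^{1-\epsilon/2}$, equivalently $\Lambda=\nu^{-1}$ with $\Delta=1-\epsilon/2$; the exponent $\Delta_\mu$ is then pinned by requiring that $\partial_\xi\tilde Q(\xi,\lambda)$ be finite and nonzero in, say, the long-distance branch. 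With these identifications, $x=a\xi$ gives $x/x_c\simeq\xi\lambda$, so the regimes $\xi\lambda\gg 1$ and $\xi\lambda\ll 1$ correspond respectively to walks that do not, and walks that do, probe the cubic term in $\avg{L_N}_\mu$.

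Heuristically, in the linear regime $N\ll N_c$ the ensemble is an effective chain with $\dH=2$ and, via \eqref{assumption_ii}, $\eta_N(0)\sim 1$, so $\rho=0$ and Theorem \ref{Theorem:dS-dH} would give $d_s^0=2$. In the cubic regime $N\gg N_c$, we have $\dH=4-\epsilon$, $\eta_N(0)\sim(\nu N^{2-\epsilon})^{-1}$, and again $\rho=0$, yielding $d_s^\infty=4-\epsilon$. To promote this picture to a proof inside a single scaling limit, I would translate the almost-sure bounds \eqref{assumption_ii}--\eqref{assumption_iii} into ensemble-averaged estimates of $B_N^{(1)}$, $B_N^{(2)}$ and $\overline B_N^{(2)}$, evaluated separately in the two regimes, and plug them into the upper bound \eqref{etaprime:upper:1}
\beq
\avg{\abs{\eta_0'(x)}}_\mu<(1-x)^{-N}\!\left(\frac{\avg{\eta_N(0)}_\mu}{x}+\avg{B_N^{(2)}}_\mu\right),
\eeq
optimised over $N$ by the condition $xF_N(x)\simeq 1$ of Lemma \ref{lemma:etaprime}. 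A matching lower bound is produced by combining \eqref{new:bound} with Jensen's inequality applied to $\expect{L_n^{-1}}{\mu}$ and $\expect{L_n}{\mu}$, in direct analogy with the argument of Section \ref{sec4}.

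Passing to the scaling limit under the identifications above then yields
\beq
\abs{\partial_\xi\tilde Q(\xi,\lambda)}\sim\cases{
\xi^{-1},&$\xi\lambda\gg 1,$\\
\lambda^{1-\epsilon/2}\xi^{-\epsilon/2},&$\xi\lambda\ll 1,$
}
\eeq
up to slowly varying corrections, and inserting these asymptotics into \eqref{trdsdef} produces the advertised $d_s^0=2$ and $d_s^\infty=4-\epsilon$.

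The hardest step will be controlling the uniformity of these bounds across the crossover region $\xi\lambda\simeq 1$, where the slowly varying factors $\Psi$ and $\Psi_+$ take their argument $\sqrt{\nu}\,N^{1-\epsilon/2}$ to be of order unity and one must verify that they do not contaminate the power-law exponents extracted at $\xi\to 0$ and $\xi\to\infty$. A secondary subtlety is that Assumption \ref{a1} is phrased graph by graph while the Section \ref{sec5} bounds naturally use ensemble averages on both sides, so some care with Jensen's inequality and with the interchange of expectation with the sums defining $B_N^{(1)}$ and $B_N^{(2)}$ is required to rule out spurious tail contributions from rare graphs in $\mu$.
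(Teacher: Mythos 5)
Your overall architecture coincides with the paper's: the proof there is exactly a two-sided bound on $\avg{\abs{\eta_0'(x)}}_\mu$ with a crossover set by $\nu$ (Lemma \ref{lm:6.1}: $c_-\le (\nu b' x^{\epsilon/2}+x)\avg{\abs{\eta_0'(x)}}_\mu\le c_+\Psi_+^2$), followed by the scaling limit $x=a\xi$, $\nu\simeq a^{1-\epsilon/2}/G$ and extraction of $d_s^0=2$, $d_s^\infty=4-\epsilon$ via \eqref{trdsdef}; your upper bound (take expectations in \eqref{etaprime:upper:1}, bound $\avg{\eta_N(0)}_\mu$ and $\avg{B_N^{(2)}}_\mu$ using \eqref{assumption_i}--\eqref{assumption_ii}, choose $N\sim x^{-1/2}$) is literally the paper's, and your choice $\nu=(a/\lambda)^{1-\epsilon/2}$ versus the paper's $\nu=a^{1-\epsilon/2}/(b'G)$ is an immaterial reparametrisation.

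The genuine gap is in your lower bound. You propose to combine \eqref{new:bound} with "Jensen's inequality applied to $\expect{L_n^{-1}}{\mu}$ and $\expect{L_n}{\mu}$, in direct analogy with Section \ref{sec4}", but that analogy does not transfer: Assumption \ref{a1} gives no control over $\expect{L_n^{-1}}{\mu}$ (Section \ref{sec4} relied on the exact GRT formula \eqref{U:Dinv}), and $G_{N^-(x)-1}(x)$ is a sum of correlated products $L_k\eta_{k+1}^2$, so a factor-by-factor Jensen estimate does not lower-bound its expectation. Moreover a crude de-randomisation such as $L_k\ge 1$ together with the deterministic bound on $\eta_{k+1}$ coming from \eqref{assumption_iii} undershoots the required strength by a factor of order $\nu^{-1/2}$, which would leave $d_s^\infty$ pinned only between $4-\epsilon$ and $4$ and $d_s^0$ between $1$ and $2$ rather than fixed. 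The missing idea is the paper's Cauchy--Schwarz step (as in the proof of Theorem \ref{ds leq dh}): bound $G_{N^-(x)-1}(x)\ge F_{N^-(x)-1}(x)^2/B_{N^-(x)-1}$, use the defining property \eqref{FNminus:defn} to replace the numerator by $\bigl(1/x - L_{N^-}\eta_{N^-+1}(x)\bigr)^2$, invoke \eqref{assumption_ii} and \eqref{assumption_iii} to show $L_{N^-}\eta_{N^-+1}(x)\le N^-\Psi\Psi_+$ is subleading relative to $1/x$, bound $B_{N^--1}\le B_{N^*}$ with $N^*=\lceil b x^{-1/2}\rceil$ from \eqref{Nminus:upper}, and only then apply Jensen in the single form $\avg{1/B_{N^*}}_\mu\ge 1/\avg{B_{N^*}}_\mu$, yielding $\avg{\abs{\eta_0'(x)}}_\mu\ge c/(x^2\avg{B_{N^*}}_\mu)\sim(\nu b'x^{\epsilon/2}+x)^{-1}$. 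Without this (or an equivalent use of \eqref{assumption_iii} to make the $1/L$ factors deterministic while keeping the random prefactor $L_k$ and averaging it linearly), the matching lower bound you assert does not follow. Your heuristic appeal to Theorem \ref{Theorem:dS-dH} with $\rho=0$ in each regime is fine as motivation but, as you implicitly concede, cannot substitute for the two-sided estimate inside a single scaling limit.
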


To prove the theorem we need the following lemma which we will prove in the following two  subsections
\begin{lemma} \label{lm:6.1}
For a multigraph ensemble $\R{M}=\{\C{R},\mu\}$ satisfying Assumption \ref{a1} 
\beq
\! \! \! \! \! \! \! \! \! c_-  \frac{1}{\nu b' x^{\epsilon/2} +x} < \avg{\abs{\eta_0'(x)}} _{\mu}< c_+ \frac{1}{\nu b' x^{\epsilon/2}+x}  \Psi_+^2\left( \sqrt{\frac{\nu}{x^{1-\epsilon/2}}} \right),
\eeq
where $c_-, c_+,b'$ are positive constants.
\end{lemma}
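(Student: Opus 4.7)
The plan is to combine the deterministic estimates of Section~\ref{sec5} with the ensemble bounds of Assumption~\ref{a1}, averaging at the end and choosing the cutoff $N$ adaptively as a function of $x$. The target form $1/(\nu b' x^{\epsilon/2}+x)$ interpolates between $1/x$ (short-distance, $d_s^0=2$) and $1/(\nu x^{\epsilon/2})$ (long-distance, $d_s^\infty=4-\epsilon$), so the cutoff must simultaneously control both regimes.

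\emph{Upper bound.} I would start from the pointwise bound \eqref{etaprime:upper:1}, $|\eta_0'(x)| < (1-x)^{-N}(\eta_N(0)/x + B_N^{(2)})$, and take expectation under $\mu$. Assumption \eqref{assumption_ii} combined with the ansatz \eqref{assumption_i} gives $\avg{\eta_N(0)}_\mu \leq c\Psi_+/(\nu N^{2-\epsilon}+1)$, while the pointwise combination of \eqref{assumption_ii} and \eqref{assumption_iii} yields $L_k\eta_{k+1}(0)^2 \leq \avg{L_k}_\mu\Psi\cdot[(k+1)\Psi_+/\avg{L_{k+1}}_\mu]^2$. Summing in the definition of $B_N^{(2)}$ using \eqref{assumption_i} produces $\avg{B_N^{(2)}}_\mu \leq c'\Psi\Psi_+^2 N^\epsilon/\nu$ once $N \gg \nu^{-1/(2-\epsilon)}$ (and $\leq c'\Psi\Psi_+^2 N^2$ otherwise). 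Taking $N \simeq x^{-1/2}$ keeps $(1-x)^{-N}$ of order unity and makes both summands $\avg{\eta_N(0)}_\mu/x$ and $\avg{B_N^{(2)}}_\mu$ of size $\Psi_+^2/(\nu x^{\epsilon/2}+x)$, yielding the desired upper bound.

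\emph{Lower bound.} Lemma~\ref{lemma:etaprime} gives $|\eta_0'(x)| > c\,G_{N^-(x)-1}(x)$, and substituting the lower bound on $\eta_{k+1}(x)$ from \eqref{eta:lower2} replaces $G_{N^-(x)-1}(x)$ by $e^{-2}\overline{B}_{N^-(x)-1}^{(2)}$, as in the proof of Lemma~\ref{lemma:dS:upper}. I would then control $\avg{\overline{B}_{N^-(x)-1}^{(2)}}_\mu$ from below by Jensen's inequality $\avg{1/L_m}_\mu \geq 1/\avg{L_m}_\mu$ together with the ansatz \eqref{assumption_i}, and by using the almost sure bounds \eqref{assumption_ii}, \eqref{assumption_iii} to sandwich the graph-dependent integer $N^-(x)$ between two deterministic scales of order $x^{-1/2}$ up to slowly varying factors. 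Substituting the resulting sum and matching scales with the upper bound produces the matching $c_-/(\nu b' x^{\epsilon/2}+x)$.

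\emph{Main obstacle.} The hardest step is the lower bound: because $N^-(x)$ depends on the graph and $\overline{B}_N^{(2)}$ involves products of the random variables $L_k$ and $1/L_m$ whose joint distribution is not controlled by the assumptions, ensemble averaging requires more than just the mean behaviour encoded by \eqref{assumption_i}. The almost sure character of \eqref{assumption_ii} and \eqref{assumption_iii} is therefore essential, both to pin $N^-(x)$ to its typical value and to transfer pointwise information about $L_k$ to its average. Moreover the estimates must remain uniform across the crossover scale $x \simeq \nu^{1/(1-\epsilon/2)}$, so that the short-distance behaviour $d_s^0 = 2$ and the long-distance behaviour $d_s^\infty = 4-\epsilon$ emerge from a single expression.
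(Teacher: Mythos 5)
Your upper bound is sound and essentially the paper's: both start from \eqref{etaprime:upper:1}, take expectations, bound $\avg{\eta_N(0)}_\mu/x$ by \eqref{assumption_ii} with \eqref{assumption_i}, bound $\avg{B_N^{(2)}}_\mu$ by $cN^3\Psi_+^2/\avg{L_N}_\mu$ (up to a constant), and choose $N\simeq x^{-1/2}$. The only difference is that you obtain the $B^{(2)}$ estimate by multiplying the pointwise bounds \eqref{assumption_ii} and \eqref{assumption_iii}, which costs an extra slowly varying factor $\Psi$; the paper instead runs the dyadic argument behind \eqref{B2:upper} using only \eqref{assumption_ii} and $\avg{B_N}_\mu\simeq N\avg{L_N}_\mu$, which reproduces exactly the $\Psi_+^2$ of the statement. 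That is harmless.

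The lower bound, however, has a genuine gap. Your plan is to pass from Lemma \ref{lemma:etaprime} to $\overline B^{(2)}_{N^-(x)-1}$ via \eqref{eta:lower2} (as in Lemma \ref{lemma:dS:upper}) and then bound $\avg{\overline B^{(2)}_{N^-(x)-1}}_\mu$ from below using Jensen and the almost sure bounds. But \eqref{assumption_ii} and \eqref{assumption_iii} are \emph{upper} bounds: they give lower bounds on $1/L_m$ but no lower bound on the weight $L_k$ appearing in the numerator of $\overline B^{(2)}_N=\sum_k L_k\bigl(\sum_m L_m^{-1}\bigr)\bigl(\sum_n L_n^{-1}\bigr)$, and Jensen cannot be applied termwise to $\avg{L_k/(L_mL_n)}_\mu$ since no independence or correlation structure is assumed. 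With only the trivial bound $L_k\ge 1$ one gets $\overline B^{(2)}_N\gtrsim \mathrm{const}/\nu$, whereas the claim requires growth of order $N^{\epsilon}/\nu\sim x^{-\epsilon/2}/\nu$; so the divergence $1/(\nu x^{\epsilon/2})$ in the long-walk regime is not recovered. Your proposed ``sandwich'' of $N^-(x)$ also fails on one side: only the upper bound $N^-(x)\le \lceil b x^{-1/2}\rceil$ of \eqref{Nminus:upper} is available, the lower bound being merely of order $x^{-1/\dH}$. The paper's actual route avoids all of this: apply Cauchy--Schwarz directly to $G_{N^--1}$ to get $\abs{\eta_0'(x)}>c\,F_{N^--1}(x)^2/B_{N^--1}$, use the \emph{defining} property $xF_{N^-(x)}(x)>1$ of \eqref{FNminus:defn} to replace the numerator by $\bigl(1/x - L_{N^-}\eta_{N^-+1}(x)\bigr)^2$, control the correction term a.s.\ by \eqref{assumption_ii}--\eqref{assumption_iii} (it is $O(x^{-1/2})$ times slowly varying, hence subleading), bound the denominator by $B_{N^*}$ with the deterministic $N^*=\lceil b x^{-1/2}\rceil$, and only then apply Jensen in the form $\avg{1/B_{N^*}}_\mu\ge 1/\avg{B_{N^*}}_\mu$, which needs nothing beyond \eqref{assumption_i}. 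The key idea you are missing is that the definition of $N^-(x)$ already supplies the deterministic $1/x$ in the numerator, so the only quantity that must be averaged is $B_{N^*}$, for which Jensen goes in the right direction.
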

Defining the scaling limit as 
\beq
\abs{ \tilde Q'(\xi,G)} =\lim_{a\to 0} \left (\frac{a}{G} \right )\avg{\left |Q'\left(x=a\xi , \nu=\frac{a^{1-\epsilon/2}}{b'G}\right)\right |}_{\mu} 
 \eeq
 and using Lemma \ref{lm:6.1} gives
\bea
c_-\frac{1}{\xi^{\epsilon /2}+G\xi } < \abs{\tilde Q'(\xi,G)} < c_+ \frac{1}{\xi^{\epsilon/2} +G\xi }  \Psi_+^2\left( \sqrt{\frac{1}{G\xi^{1-\epsilon/2}}} \right).
\eea
For  short walks or equivalently $ \xi\gg G^{-1}$ we see that $\abs{\tilde Q'(\xi,G)}\sim\xi^{-1}$ giving $d_s^0=2$ while  for long walks or $ \xi\ll G^{-1}$ the $\abs{\tilde Q'(\xi,G)}\sim\xi^{-\epsilon/2}$  which leads to $d_s^\infty=4-\epsilon$ which completes the proof of the main theorem.

\subsection{Lower bound}
We now prove the lower bound of Lemma \ref{lm:6.1}. We begin with Lemma \ref{lemma:etaprime}  and proceed  as in the proof of Theorem \ref{ds leq dh} by applying the Cauchy Schwarz inequality to get
\bea \label{dimred lower bound1}
\abs{\eta '_0(x)} &>& c \frac{\left (\sum _{n=0} ^{N^{-}-1}L_n \eta_{n+1}(x) \right )^2}{\sum _{n=0}^{N^{-}-1}L_n} \\
                           &=& c \frac{\left (\sum _{n=0} ^{N^{-}}L_n \eta_{n+1}(x) - L_{N^-}\eta _{N^{-}+1}(x) \right )^2}{\sum _{n=0}^{N^{-}-1}L_n}.
\eea
We recall \eqref{FNminus:defn} and \eqref{Nminus:upper} to bound the sums in the numerator and denominator respectively. In addition we use the fact that $\eta_{N^{-}+1}(x) < \eta_{N^{-}}(x)<\eta_{N^{-}}(0)$ and assumptions \eqref{assumption_ii} and \eqref{assumption_iii} to get
\bea
 \abs{\eta '_0(x)} &>& c \frac{\left (\frac{1}{x} - N^{-}(x) \Psi \left ( \sqrt{\nu} (N^{-})^{1-\epsilon/2}\right ) \Psi_+\left ( \sqrt{\nu} (N^{-})^{1-\epsilon/2}\right ) \right )^2}{\sum _{n=0}^{N^{*}}L_n} \\
                           &>& c \frac{\left (1-x N^{*}(x) \Psi \left ( \sqrt{\nu} (N^{-})^{1-\epsilon/2}\right ) \Psi_+\left (\sqrt{\nu} (N^{-})^{1-\epsilon/2}\right)  \right )^2}{x^2\sum _{n=0}^{N^{*}}L_n} ,
\eea
where $N^* = \lceil bx^{-\half} \rceil$.  Since $\Psi(x)$, $\Psi_+(x)$ are slowly varying functions the second term in the numerator is sub-leading as $x\to0$.
Taking the expectation value and applying Jensen's inequality we find for $x<x_0<1$
\beq
 \avg{\abs{\eta '_0(x)}} _{\mu} > c \frac{1}{x^2\avg{\sum _{n=0}^{N^{*}}L_n}_{\mu}} 
\eeq
which, together with $\avg{B_N}_\mu<c' N\avg{L_N}_\mu$, implies the lower bound of Lemma \ref{lm:6.1} with $b'=b^{2-\epsilon}$.

\subsection{Upper bound}

To prove the upper bound we first note that from \eqref{etaprime:upper:1} 
\beq
\abs{\eta_0'(x)} < (1-x)^{-N} \left(   \frac{  \eta_{N}(0)  }{x}+B_N^{(2)} \right)
\eeq
for any $N$. Taking expectation values we now get
 \bea
\avg{\abs{\eta_0'(x)}}_\mu &<&(1-x)^{-N}  \left(   \frac{\avg{\eta_{N}(0)}_{\mu}}{x}+  \avg{B_N^{(2)}}_{\mu} \right)  \\
                                     &<& (1-x)^{-N}  \left (\frac{N}{x \avg{L_N}_{\mu}} + \frac{N^3}{\avg{L_N}_{\mu}} \right ) \Psi ^2_{+}(\sqrt{\nu}N^{1-\epsilon/2})
\eea
where we used the fact that $\avg{B_N^{(2)}}_{\mu} < \textrm{const} + c_3 \frac{N^3}{\avg{L_n}_{\mu}} \Psi ^2_{+}(\sqrt{\nu}N^{1-\epsilon/2})$; to prove this  proceed similarly to the upper bound of \eqref{B1:upper} using  $cN\avg{L_N}_\mu<\avg{B_N}_\mu<c' N\avg{L_N}_\mu$ together with \eqref{assumption_ii}. Choosing $N=\lceil b x^{-\half}\rceil$ gives the upper bound of Lemma \ref{lm:6.1}.

\section{Conclusions}

% ===== Intro

In this paper we have studied random walks on multigraph ensembles motivated by their close relationship to various causal quantum gravity models. In particular this approach is well suited to studying the spectral dimension and exploring its possible  scale dependence,   first introduced in the context of random combs in \cite{Atkin:2011ak},  in causal quantum gravity.
%
% ===== Sec 4

In the simplest model, discussed in Section \ref{sec4}, the measure on the multigraph is induced by the uniform measure on infinite causal triangulations or, equivalently, a critical Galton Watson process conditioned on non-extinction. This multigraph ensemble has Hausdorff dimension $\dH=2$. We show that by scaling the variance of the Galton Watson process to zero at the same time as one scales the walk length to infinity (cf. \eqref{scalingQ}) one obtains a continuum limit with a scale dependent spectral dimension which is $d_s^\infty=2$ at large scales and $d_s^0=1$ at small scales. Here $\lambda^{-\half}$ is related to the rescaled second moment $f''(1)$ of the branching process and $\lambda$ determines the scale separating the short and the long walk limit. Regarding the physical interpretation of this model two comments are in order:
\begin{enumerate}
\item In pure two-dimensional CDT there is no dependence on  Newton's constant due to the Gauss-Bonnet theorem. Hence, there is no length scale such as the Planck length in the model. This is  reflected in the fact that the uniform measure on infinite causal triangulations corresponds to a critical Galton Watson process with off-spring distribution $p_k=2^{-k-1}$ which has $f''(1)=2$ fixed. 
%l. 
On the other hand as discussed in Section \ref{sec4}  the model with arbitrary $f''(1)$ can be thought of as describing CDT with a weight in the action coupling to the absolute value of the curvature \cite{Durhuus:2009sm} 
and $\sqrt{\lambda}$ viewed as renormalised two dimensional analogue of the gravitational constant $G^{(2)}$. In the vein of a recently introduced model of (2+1)-dimensional Ho\v rava-Lifshitz gravity \cite{Anderson:2011bj}, one might view this model as a toy model of (1+1)-dimensional Ho\v rava-Lifshitz gravity in the sense that it describes CDT with a higher curvature term.
\item Another point of interest is the dynamics of the model. The  model of random combs with scale dependent spectral dimension introduced in \cite{Atkin:2011ak} is a purely kinematic model proposed to show the existence of the scaling limit in a simplified context. On the other hand, the multigraph ensemble introduced in Section \ref{sec4} is directly related to CDT. It was shown in \cite{Sisko} that the rescaled  length process $l(t)=2a L_{[t/a]}/f''(1)$ of the multigraph  is described by the usual CDT Hamiltonian in the continuum limit
\beq
\hat{H}=-2\frac{\partial}{\partial l}-l \frac{\partial^2}{\partial l^2}+2 \mu l,
\eeq
where $\mu$ is the cosmological constant.
\end{enumerate}

% ===== Sec 5

In Section \ref{sec5} general properties of multigraph ensembles with transient walks are discussed, restricting ourselves to the physically interesting regime with $2\leq \dH \leq 4$. The main results are that for any multigraph $G$ such that $\dH$ and $d_s$ exist one has
\beq
d_s\leq\dH.
\eeq
If in addition the resistance exponent $\rho$  exists, then
\beq
d_s=\frac{2\dH}{2+\rho}.
\eeq
This implies that  $\rho=0$, which is a purely geometrical condition on the distribution of edges,  is a necessary and sufficient condition for $\dS=\dH$. It is interesting to notice in this context how multigraphs with $\rho=0$ attain the upper bound in \eqref{relationdsdh}.
%

% ===== Sec 6

In Section \ref{sec6} we propose a model of a multigraph ensemble with scale dependent spectral dimension in the transient regime. In particular, we assume that the measure $\mu$  satisfies
\beq
\avg{L_N}_\mu\simeq\nu N^3 +N,
\eeq
in addition to two more technical properties stated in assumptions \eqref{assumption_ii} and \eqref{assumption_iii}. It is then shown that this multigraph ensemble has a scale dependent spectral dimension with $d_s^0=2$ at short scales while at long scales $d_s^\infty=4$. Regarding the physical interpretation of this model we can make the following comments:
\begin{enumerate}
\item The scale separating the regime of $d_s^0=2$ and $d_s^\infty=4$ is set by the rescaled $G=a/\nu$. Viewing the multigraph ensemble as a model of four-dimensional causal quantum gravity one can interpret $G$ as the renormalised Newton's constant. While $G$ sets a scale on the duration of the walk, it is $\sqrt{G}$ that corresponds to the extent of the walk distance on the graph which can be identified with the Planck length $l_P$. 
\item In  \cite{letter} we propose the model discussed in Section \ref{sec6} as a concrete model of four-dimensional CDT. In particular, the model gives some analytical understanding of the numerical results presented in \cite{Ambjorn:2005db}
and is able to explain in a natural way the expression conjectured there for
%. 
 the continuum return probability density $\tilde{P}(\sigma)$ as a function of continuous diffusion time $\sigma$
\bea \label{pav-scale}
\tilde{P}(\sigma) \sim   \frac{2 G^2}{\sigma^2}  \frac{1}{ 1 + 2G / \sigma}.
\eea
\end{enumerate}

% ===== Final

As a final comment we would note that the framework developed in this article is quite universal;  it can   be used to describe the phenomenon of a scale dependent spectral dimension in any theory with a fundamental discreteness scale and is equally applicable, for example, to three-dimensional causal quantum gravity. We leave further applications to future work.

\ack{%This work is supported by STFC grant ST/G000492/1.
GG would like to acknowledge the support of  the A.G. Leventis Foundation and the A.S. Onassis Public Benefit Foundation grant F-ZG 097/ 2010-2011. This work is supported by EPSRC grant EP/I01263X/1 and STFC grant ST/G000492/1. SZ would like to acknowledge support of the STFC, as well as of a EPRSC visiting grant and Visiting Scholarship at Corpus Christi College, Oxford University during which this work was started.}

\appendix
\renewcommand\thesection{\Alph{section}}

\section{Basic solvable examples}\label{Solvable}

We  give two exactly solvable examples to illustrate some of the features derived in this paper.

\subsection{Recurrent case: spectral dimension of the half line}

First  we determine the spectral dimension of the half line through the generating function formalism (a slightly more detailed discussion can for example be found in \cite{Durhuus:2005fq}). Note that the half line is a special multigraph $\{L_k=1,k=0,1,...\}$ and as we see in Lemma \ref{monotonicity} it plays an important role in providing certain upper bounds for the return probability on any multigraph.

Since the random walk on the half line has to leave the root with probability one and otherwise can move to either neighbour  with probability $1/2$  the generating function for  the first return probability satisfies
\beq \label{Phalfline}
P_\infty(x) =\frac{1-x}{2-P_\infty(x)}.
\eeq
From this we get that 
\beq
P_\infty(x) =1-\sqrt{x}
\eeq
and thus 
\beq \label{Qhalfline}
Q_\infty(x) =\frac{1}{\sqrt{x}}.
\eeq
This shows that $Q_\infty(x)$ diverges as $x\!\to\! 0$ %as $Q_\infty(x)\sim x^{-1/2}$ and that thus 
and that the spectral dimension is $d_s=1$. From the multigraph point of view, one also has the trivial result that $B(N)=\sum^N_{k=0} L_k=N+1$ and thus that $d_s=\dH$.

If  instead of the half line we consider a line segment of length $L$, then \eqref{Phalfline} becomes
\beq \label{PhalflineL}
P_L(x) =\frac{1-x}{2-P_{L-1}(x)}
\eeq
for $L\geq 1$ and $P_0(x)=1$. This relation can be iterated to give \cite{Durhuus:2005fq}
\beq \label{Psegment}
P_L(x) =1- \sqrt{x} \frac{(1+\sqrt{x})^L - (1-\sqrt{x})^L}{(1+\sqrt{x})^L + (1-\sqrt{x})^L}.
\eeq
  Similarly the contribution to the first return probabilitty on the full half line from walks that do not extend beyond $L$ is given by
\beq \label{PhalflineLf}
R_L(x) =1- \sqrt{x} \frac{(1+\sqrt{x})^L +(1-\sqrt{x})^L}{(1+\sqrt{x})^L - (1-\sqrt{x})^L}.
\eeq
%One observes that $\lim_{L\to\infty} P_L(x)=P_\infty(x)$ as expected.
\subsection{Non-recurrent case: spectral dimension of a multigraph with $L_k\sim k^2$}

As an explicit example of a non-recurrent graph we consider a (fixed) multigraph $M=\{L_k,k=0,1,...\}$ with $L_k =(k +1)(k+2)$. This particular multigraph is rather special as we will see in the following. Note that the probability for a random walker at vertex $k+1$ of $M$ to go forward is 
\bea
p_{k+1} = \frac{L_{k+1}}{L_{k} +L_{k+1}}= \frac{k+3}{2(k+2)}
\eea
and the probability of returning from $k+2$ to $k+1$,
\bea
q_{k+1} =1- p_{k+2} = \frac{L_{k+1}}{L_{k+1} +L_{k+2}}= \frac{k+2}{2(k+3)}.
\eea
We observe that 
 \bea
p_k q_k = p_k (1-p_{k+1}) =\frac{k+2}{2(k+1)}\frac{k+1}{2(k+2)} = \frac{1}{4} 
 \eea
as was the case for the half line. Hence we can relate the first return generating function for a random walker on the multigraph $M_k$  to that on half line by just compensating for the last step of the random walk which on the half line would occur with probability $1/2$ while on $M_k$ it occurs with probability $q_k$ leading to
 \beq
P_{M_k}(x) = 2 q_k P_{\infty}(x).
 \eeq
It follows that
\bea
\label{etak2}
\eta _{M_k}(x) \equiv \frac{Q_{M_k(x)}}{L_k}= \frac{1}{L_k}\frac{1}{1-P_{M_k}(x) } =\frac{1}{(k+1)(1+(k+1)\sqrt{x})}.
\eea

We note that $\eta_k(0)=1/(k+1)$ is finite which shows that the random walk is non-recurrent. The first derivative is
\beq
-\eta _{M_0}'(x) = \frac{1}{2(1+\sqrt{x})^2 \sqrt{x}}\sim x^{-1/2} \quad \textrm{as}\quad x\to 0
\eeq
and hence the spectral dimension is $d_s=3$ while  $B(N)=\sum^N_{k=0} L_k\sim N^3$ and thus $d_s=\dH$. It is straightforward to check that the resistance exponent $\rho=0$.

\section{Simple Results for Lemma \ref{constraints} }\label{Simple}
Here we outline the proofs of Lemma \ref{constraints} assuming unless otherwise stated that $\dH$ exists and  $N>N_0$.
To prove that $\rho\ge 0$ note that 
\bea  \eta_{N}(0)	=\sum_{n=N}^{\infty} \frac{1}{L_n}
				>\sum_{n=N}^{2N} \frac{1}{L_n}
				>\frac{(2N-N)^2}{\sum_{n=N}^{2N} L_n}
				\sim {\rm const}\,N^{2-\dH},   \label{etalwrbnd} \label{A:result1}\eea
where we have used Jensen's inequality. To prove that $\delta'\ge 0$ note that 
\bea  B_N^{(2)}&=&\left(\sum_{k=0}^{N_0}+\sum_{k>N_0}^{N}\right) L_k\sum_{m=k}^\infty \frac{1}{L_m}\sum_{n=k+1}^\infty \frac{1}{L_n}\nonumber\\	
 &=& \textrm{const} + \sum_{k>N_0}^{N} L_k\sum_{m=k}^\infty \frac{1}{L_m}\sum_{n=k+1}^\infty \frac{1}{L_n} \eea
 and then use \eqref{etalwrbnd}; similar arguments show that $\gamma\ge0$ and $\delta\ge0$.
 Using Cauchy Schwarz inequality
\beq
\left(\sum_{k=0}^{N-1} L_k\eta_{k+1}(0) \right)^2 < N^{\dH}\sum_{k=0}^{N-1} L_k\eta_{k+1}(0) ^2\eeq
which gives $\delta'\ge 2\gamma$.

To establish the relation between the other exponents and $\rho$, assuming it exists, we need the inequalities
%There are some constraints on $B_N^{(1)}$ and $B_N^{(2)}$
%summarized by
%
%\begin{lemma}\label{B:eta:relations}
that for any graph $G\in \C R$
\bea  \! \! \! \! \!  &(i)&\! \! \quad \eta_{N+1}(0) B_N< B_N^{(1)} < B_{N_1}^{(1)}+ \! \! \! \sum_{r=1}^{\lceil \log_2\frac{N}{N_0}\rceil }  \! \!  \eta_{\lceil N/2^r \rceil }(0)(B_{\lceil N/2^{r-1}\rceil}-B_{\lceil N/2^r \rceil}) \label{B1:upper}    \\
\! \! \! \! \! &(ii)&\quad \frac{\left(B_N^{(1)}\right)^2}{ B_N}<B_N^{(2)} < B_{N_1}^{(2)}+ \! \! \! \sum_{r=1}^{\lceil \log_2\frac{N}{N_0}\rceil }   \eta_{\lceil N/2^r \rceil}(0)(B_{\lceil N/2^{r-1}\rceil }^{(1)} -B_{\lceil  N/2^r \rceil}^{(1)})  \label{B2:upper} \eea
for $N_0\le N_1\le 2 N_0 < N$.
%\end{lemma}
The proofs exploit the fact that $\eta_k(0)$ is a decreasing sequence. For example to prove (i)  we have the lower bound
\bea B_N^{(1)}>\eta_{N+1}(0)\sum_{k=0}^{N}L_k=\eta_{N+1}(0) B_N.\eea
An upper bound is given by
%
%Now suppose that we know an upper bound on $\eta_k$ of the form
%\beq \eta_k(0) <  c N^{2-\dH+\epsilon}\eeq
%then
\bea B_N^{(1)}	&=&\sum_{k=0}^{\lceil N/2 \rceil} L_k\eta_k(0)+\sum_{k=\lceil N/2 \rceil+1}^N  L_k\eta_k(0)\nn \\
			&<&B_{\lceil N/2 \rceil}^{(1)}+\eta_{\lceil N/2 \rceil}(0)(B_N-B_{\lceil N/2\rceil})\eea
and iterating to get \eqref{B1:upper}.  The proof of (ii) proceeds analogously.
Assuming that  $\rho$ exists then it follows from \eqref{B1:upper} and \eqref{B2:upper} respectively that
\beq \gamma=\rho,\quad \delta'=2\rho.\eeq
By definition $\delta\le \delta'$ and noting
 that 
\bea   \overline B_N^{(2)} 		&>& \sum_{k=0}^{\lfloor N/2 \rfloor }L_k\left(\sum_{n>\lfloor N/2\rfloor}^N \frac{1}{L_n}\right)^2\\
						&=&c \left(\frac{N}{2}\right)^\dH\left (\eta_{\lfloor N/2\rfloor}(0)-\eta_N(0)\right)^2\\
						&=& c' N^{4-\dH+2\rho}\eea
we also have  $\delta\ge2\rho$ so conclude that $\delta=2\rho$.

\newpage

\section*{References}
\addcontentsline{toc}{section}{References}

\end{document}